\documentclass[a4paper,oneside,11pt]{article}
\usepackage{amsmath,amssymb, amssymb, amsfonts, setspace, latexsym, epsfig, epsf, rotating, longtable, setspace, a4wide,verbatim, caption}
\usepackage{amsthm}
\usepackage[utf8x]{inputenc}
\theoremstyle{plain} \newtheorem{theorem}{Theorem}
\theoremstyle{plain} \newtheorem{cor}{Corollary}%
\theoremstyle{definition} \newtheorem{rem}{Remark}%

\newcommand{\R}{\ensuremath{\mathbb{R}}}%
\newcommand{\FJ}{\ensuremath{\mathcal{J}_0^{-1}}}%
\newcommand{\FJI}{\ensuremath{\mathcal{J}_0}}%
\newcommand{\FI}{\ensuremath{\mathcal{I}_0^{-1}}}%
\newcommand{\FIjoi}{\ensuremath{\mathcal{I}_{Joint}^{-1}}}%
\newcommand{\FIjoor}{\ensuremath{\mathcal{I}_{Joint}}}%
\newcommand{\FIcom}{\ensuremath{\mathcal{I}_{comp}^{-1}}}%

\newcommand{\Lower}[1]{\smash{\lower 1.5ex \hbox{#1}}}

\usepackage{natbib}
\newcommand{\ct}[1]{(\cite{#1})}
\newcommand{\ctp}[1]{\cite{#1}}
\newcommand{\ctt}{\citet}

\begin{document}

\begin{center}
{\large{\bf
Evaluating HWE and Association in Genome Wide Association Studies: A Unified Procedure 
}}
\end{center}

\begin{center}
 Stefan Boehringer$^{1*}$ and Hajo Holzmann$^2$
\end{center}


\noindent$^1$  	 Dept of Biomedical Data Sciences,\\ 	
	Leiden University Medical Centre,\\ 
	2300 RC Leiden, The Netherlands,\\ Phone: 	 +31 71 5269743  \\
Email: correspondence@s-boehringer.org


\vskip 5mm

\noindent$^2$  Fachbereich Mathematik und Informatik, Philipps-Universität Marburg, Germany\\
Hans-Meerweinstr., 35032 Marburg, Germany\\
Email: holzmann@mathematik.uni-marburg.de\\

\noindent$^*$Corresponding author
\vskip 5mm\noindent


{\bf Running title:} Unifying HWE and Association Testing

\vskip 5mm\noindent


\newpage
\doublespacing
{\bf\Large Summary}\par

In genome wide association studies (GWASs) based on a case-control design, single nucleotide polymorphisms (SNPs) are typically evaluated for an association test and a Hardy-Weinberg equilibrium (HWE) goodness-of-fit test. SNPs are then excluded from analysis based on a HWE cutoff to avoid false positives. 
In order to avoid cutoffs based on arbitrary threshold values, we propose a conditional genotype--based test that conditions the Pearson $\chi^2$-statistic in the 3x2 contingency table on the $\chi^2$-statistic for HWE in the control group, and develop the relevant asymptotic distribution theory. We show by simulations that our test in most scenarios is more powerful than two competing retrospective procedures.
Another important advantage of the proposed method is a better ranking of SNPs in GWASs as HWE is accounted for in computing p-values of SNP association. We demonstrate this effect on a data set in an alopecia study.
In conclusion, our test makes separate HWE testing superfluous by providing a unified framework and strictly improves on the standard procedure in terms of power and interpretability, thereby making replication more cost effective and improving subsequent fine mapping.\par

{\bf Keywords:}  Hardy-Weinberg Equilibrium;  Genome-Wide Association; Case-Control; False discovery Rate; SNP; Association Testing
\newpage

\section{Introduction}

Genome wide associations studies (GWASs) are currently used in epidemiological studies to identify genetic variants with predictive power for a phenotype \ct{kruglyak_2008}. In such a study typically 500,000-1,000,000 single nucleotide polymorphisms (SNPs) are scrutinized, each being tested for significant association with a phenotype.

We focus on case-control studies for which several testing strategies exist which we briefly review. 
First, generalized linear models (GLMs), which regress the phenotype on observed genotypes, lead to a class of tests that includes the popular Armitage trend test. 
Second, retrospective models are used, where the genotype distribution is compared conditional on the group status. This class includes the Pearson $\chi^2$-test that compares genotype frequencies across groups as well as the approach in \ct{chen_2007}, where frequencies in cases are compared with a projection of control frequencies.
Third, heuristic combination strategies of GLMs with tests that make use of information about the genotype distribution such as Hardy-Weinberg equilibrium (HWE) deviations have been suggested ({\it e.g.} \ctt{wang_2008,song_2006}).\\
In this paper, we propose a novel solution for analyzing HWE, an aspect of case-control studies that has not been sufficiently addressed so far. 
HWE describes the assumption that the two observed copies (alleles) at a genetic locus are randomly and independently drawn from the previous generation.
We propose a conditional genotype--based test that conditions the Pearson $\chi^2$-statistic in the 3x2 contingency table on the $\chi^2$-statistic for HWE in the control group.
A motivation for our approach stems from the fact that HWE is usually assumed to hold in the control group as has been empirically confirmed in many studies ({\it e.g.} \ctp{yeager_2007}). The HWE assumption in controls can be tested by means of a goodness-of-fit test, and deviations from HWE often indicate systematic errors, {\it e.g.} in the process of observing genotypes (genotyping errors), sample contamination or confounding by population strata (see below). It is therefore desirable to put more weight on SNPs for which the control group closely follows HWE but otherwise to reduce the weight. Below we shall show that conditioning on HWE in the control group precisely has this effect.

In the GWAS situation, SNPs departing from HWE in controls are excluded from further analysis based on testing with a small significance level ({\it e.g.} \ctp{gudmundsson_2007,yeager_2007,salmela_2008}). However, such a strategy is problematic. 
The significance level for the HWE test used to exclude a SNP is chosen rather arbitrarily and has indeed shifted historically due to the increasing number of SNPs used in GWASs and the related multiple testing problem ({\it e.g.} compare \ctp{gudmundsson_2007,salmela_2008}). 
At level $\alpha = .05$ for the HWE test in a study with 500,000 SNPs, 25,000 SNPs are expected to be excluded even if HWE holds true. This situation has led to suggestions of choosing $\alpha = 0.001$ or lower thereby leaving SNPs with moderate deviation from HWE in the study. Our condional approach avoids this arbitrary thresholding and instead offers a continuous correction for deviations from HWE.



In the literature, HWE was identified as an important criterion both to evaluate quality of association studies \ct{salanti_2005} as well as to detect systematic errors \ct{xu_2002}. The latter reference notes the correlation between false positives and HWE deviations, which is attributed mainly to genotyping errors. However, this correlation is naturally implied by our results in the absence of systematic error. \ctt{wittke-thompson_2005} develop a goodness-of-fit test for HWE by separating deviation due to effects of some genetic model and ``genuine'' HWE deviations. Strong assumptions about the genetic model are needed. \ctt{wang_2008} follow a similar idea by assuming that HWE deviations in cases may be due to a genetic effect. They focus on global testing by adopting a so called tail-strength measure following \ctt{taylor_2006}. \ctt{song_2006} consider a weighted average between the Armitage trend test and a trend statistic comparing HWE in cases and controls. The weighting factor is chosen arbitrarily and is justified by 
simulations. The most closely related work proposes a test, called EHWE (expected HWE) test in the following, which ignores possible deviations from HWE in controls altogether \ct{chen_2007}. This approach is later shown to often have unfavorable properties in the GWAS setting. 
Similar to the EHWE test our procedure is more powerful than an ordinary genotype based test if controls follow HWE but in contrast to the EHWE it offers robustness against deviations from HWE, a feature of particular importance in GWASs.\par


The paper is organized as follows: In section \ref{sec_theory} we introduce a new parametrization of the testing problem and develop the asymptotic theory for a conditional test. In Section \ref{sec_simulation}, we scrutinize finite-sample properties of the test by means of simulation studies and compare the test with alternative approaches. Section \ref{sec_data} is concerned with the analysis of a data set on an alopecia phenotype. This analysis gives some insight into behavior under mis-specification as some mild deviation from HWE in controls is observed. In Section \ref{sec_discussion} we conclude with a discussion in which we summarize the properties of our proposed procedure and highlight the benefits for practical data analysis. Proofs and further technicalities are contained in a supplementary appendix. 

\section{Asymptotic conditional distribution}\label{sec_theory}


\subsection{Reparametrization of the multinomial distribution}\label{sec:repar}

The genotype distribution for $N$ individuals at a SNP can be parametrized by a multinomial distribution $M(N;\pi_1,\pi_2,\pi_3), \pi_1+\pi_2+\pi_3=1,$ with three possible resulting categories and $N$ repetitions. Categories represent genotypes determined by allele counts of one arbitrarily chosen allele.\par
In order to motivate our new parametrization, we recall the most simple model of a neutral coalescent process for a fixed population size \ct{hudson_1990}. In this model each allele is drawn randomly and independently from the previous generation with replacement. The grouping of two alleles into a genotype (an individual) is not needed in the model as it seeks to describe allele behaviors in terms of frequency changes and  time ({\it e.g.} number of generations) that has passed since two alleles were drawn from the same ancestor.
In this setting, the allele frequency $\rho$ suffices to describe the distribution per generation. Following this notion, we interpret the genotype distribution as the result of a two stage process. First, alleles are selected at random for transmission to the current generation (gamete formation), a process that only depends on the allele frequency $\rho$ in the previous generation. Second, alleles are paired at random to form genotypes (individuals) in the current generation. Then the genotype distribution follows the HWE expectation if $\pi_j = f_j(\rho), j = 1, 2, 3$, where
$$
	\big(f_1(\rho), f_2(\rho), f_3(\rho) \big) = \big(\rho^2, 2 \rho (1-\rho), (1 - \rho)^2\big).
$$ 
 We introduce a second parameter $\eta$ that measures deviation of genotype formation from a HWE expectation. We define $(\rho,\eta) = T(\pi_1,\pi_2)$, where
\begin{eqnarray*}
	\rho & = & \pi_1 + \pi_2/2\\
	\eta & = & \text{sgn}\big(\pi_2 -f_2(\rho)\big)
		\Big(\sum_{j=1}^3 \frac{(\pi_j - f_j(\rho))^2}{f_j(\rho)} \Big)^{1/2}.
\end{eqnarray*}
One can recover the original parameters of the multinomial by letting $ 	\pi_{1} = \rho - \pi_{2}/2$ and $\pi_{2} = f_2(\rho) (1 + \eta)$ thereby retaining the full information on the genotype distribution (see supplementary appendix Section \ref{asec:parnew}). The parameter $\eta$ measures an excess or deficit of heterozygous genotypes (allele count of 1 in the genotype) relative to HWE. Not only does this parametrization allow to formulate our problem more conveniently, it also has some merits in reformulating existing statistical procedures concerning HWE. For example, we will later discuss the EHWE test which has a very simple null hypothesis in our parametrization \ct{chen_2007}. We also note, that graphical representations of genotype frequencies might be more easily interpreted as compared to De Finetti diagrams \ct{edwards_2000}.

\subsection{Maximum likelihood estimates and Fisher information of the reparametrized likelihood}

Consider again the multinomial distribution $M(N;\pi_1,\pi_2,\pi_3)$ as above. Let $X = (n_1,n_2,n_3) \sim M(N;\pi_1,\pi_2,\pi_3)$. The maximum likelihood estimates for $\pi_i$, $i=1,2,3$, are given by
$ \hat \pi_i = n_i/N$.
We have
\[ \sqrt{N} \big((\hat \pi_1,\hat \pi_2)^T - (\pi_1, \pi_2)^T \big) \stackrel{d}{\to} N(0,\FJ),\]
where 
\[ \FJ = \left(\begin{array}{cc} \pi_1 (1-\pi_1) & - \pi_1 \pi_2 \\ - \pi_1 \pi_2 & \pi_2 (1-\pi_2) \end{array}\right),\qquad \FJI = \left(\begin{array}{cc} \frac{1}{\pi_1} +\frac{1}{\pi_3} & \frac{1}{\pi_3} \\ \frac{1}{\pi_3} & \frac{1}{\pi_2} +\frac{1}{\pi_3} \end{array}\right),\]
$\FJI$ being the Fisher information of the multinomial.\\*[0.2cm] 
The ML estimates of $(\rho, \eta)$ are then given by $(\hat \rho, \hat \eta) = T(\hat \pi_1,\hat \pi_2 )$. Their asymptotic distribution is easily derived from the $\delta-$method. Indeed, 
\[ \sqrt{N} \big((\hat \rho,\hat \eta)^T - (\rho, \eta)^T \big) \stackrel{d}{\to} N(0,\FI),\]
where $ \FI =   DT\, \FJ \, DT^T$ and $DT$ denotes the Jacobian of the map $T(\pi_1, \pi_2)$ defined above. Straightforward algebra shows that
%
%
%
\begin{eqnarray}
\FI	& = & \frac{1}{2} \left(
\begin{array}{ll}
	(1 -\eta) R  &  \eta  E (1 - 2 \rho) \\
	\eta  E (1 - 2 \rho) & - E V/R
\end{array}\right),  \qquad \begin{array}{ccl}
	R & = & \rho (1 - \rho), \qquad E = 1 + \eta \\
	V & = & \eta^2(1 -2 \rho )^2 -2 R + \eta  (2 - 6 R)\end{array}
\label{eq:fisherinfotranspar}
\end{eqnarray}
Observe that if HWE holds, $\eta =0$ and thus $\FI = \text{diag}\, (R/2, 1)$ is a diagonal matrix and the estimators $(\hat \rho, \hat \eta)$ are asymptotically independent.

\subsection{Conditional tests}\label{sec:comparision}
Consider two multinomial distributions $M(N_i;\pi_{i,1},\pi_{i,2},\pi_{i,3})$, $i=1,2$, corresponding to control group for $i=1$ with $N_1$ repetitions and to the case group for $i=2$ with  $N_2$ repetitions. Let $X_i = (n_{i,1},n_{i,2},n_{i,3}) \sim M(N_i;\pi_{i,1},\pi_{i,2},\pi_{i,3})$.  The hypothesis of homogeneity is
\[ H : \pi_{1,j} = \pi_{2,j} =: \pi_j,\qquad j=1,2,3.\]
We set $n_{\cdot j} = n_{1,j} + n_{2,j}$ and $N = N_1 + N_2$.
The unrestricted log-likelihood is given by 
\[  \mathcal{L}(\pi_{1,1},\pi_{1,2},\pi_{2,1},\pi_{2,2}) = \sum_{i=1,2} \sum_{j=1,2} n_{i,j} \log \pi_{i,j} + \sum_{i=1,2} n_{i,3} \log \pi_{i,3},  \]
where $\pi_{i,3} = 1-\pi_{i,1} - \pi_{i,2} $, and the restricted log-likelihood under $H$
\[ \mathcal{L}(\pi_{1},\pi_{2}) = \sum_{j=1,2} n_{\cdot j} \log \pi_{j} + n_{\cdot 3} \log \pi_{3}.  \]
The unrestricted ML estimates are $\hat \pi_{i,j} = n_{i,j}/N_i$, $i=1,2$, $j=1,2$, and in the equivalent parametrization, $(\hat \rho_i,\hat \eta_i) = T(\hat \pi_{i,1}, \hat \pi_{i,2})$. Further, the restricted ML estimates under $H$ are given by $\hat \pi_j = n_{\cdot j}/N$. Let $\mathcal{D}(Z)$ denote the distribution of the random variable $Z$, and let $d$ be a metric on the probabilities on $\R$ which metrizes weak convergence. 
\begin{theorem}\label{the:mainresultgen}
Suppose that $H$ holds. If $c=N_1/N$ is bounded away from $0$ and $1$, then for the asymptotic conditional distribution of the LRT statistic, given $\hat \eta_1$, we have as $N \to \infty$ that 
\begin{equation}\label{eq:asymptoticgen}
	d\Big(\mathcal{D}\Big(
		2 \big(\mathcal{L}(\hat \pi_{1,1},\hat \pi_{1,2},\hat \pi_{2,1},\hat \pi_{2,2})
			- \mathcal{L}(\hat \pi_{1},\hat \pi_{2}) \big)| \hat \eta_1
		\Big), \mathcal{D}\big(W| \hat \eta_1\Big)\Big) = o_P(1),
\end{equation}
where $\mathcal{D}\Big(\cdot| \hat \eta_1\Big)$ is the conditional distribution given $\hat \eta_1$, 
\begin{equation}\label{eq:asympconddistr}
 W  =  \big( \Sigma^{1/2}\, Z + \mu(\hat \eta_1) \big)^T  \FIjoor   \big(\Sigma^{1/2} Z + \mu(\hat \eta_1) \big),
\end{equation}
$Z \sim N(0,I_4)$ is independent of $\hat \eta_1$, 
\begin{equation}\label{eq:noncentralparcond}
	\mu(\hat \eta_1) = 	\sqrt{N} \big(\hat \eta_1 - \eta \big)\left(
	\begin{array}{c}
		(1 - c) \eta  R  (2 \rho -1)/V \\
		c		\eta  R  (2 \rho -1)/V \\
	1-c \\
	-c	
	\end{array}
	\right).
\end{equation}
and the matrix $\Sigma$ is given in (\ref{eq:covasymcondright}) in the appendix.
\end{theorem}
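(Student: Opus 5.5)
The plan is to reduce the LRT statistic to a quadratic form in a jointly asymptotically normal vector, and then obtain the conditional law from the Gaussian conditioning formula. Work in the reparametrization of Section~\ref{sec:repar}. Collect the unrestricted estimates in
\[
\hat\theta=(\hat\rho_1,\hat\rho_2,\hat\eta_1,\hat\eta_2)^T .
\]
Under $H$ both groups share the same $(\rho,\eta)$. The LRT for homogeneity is invariant under the smooth bijection $T$, so a standard second-order Taylor expansion of both log-likelihoods around the true value gives
\[
2\big(\mathcal{L}(\hat\pi_{1,1},\dots)-\mathcal{L}(\hat\pi_1,\hat\pi_2)\big)=U_N^T\,\FIjoor\,U_N+o_P(1).
\]
Here $U_N$ is a $4$-vector of $\sqrt N$-scaled contrasts of $\hat\theta$, for instance $\sqrt N(\hat\theta-\tilde\theta)$ with $\tilde\theta$ the pooled estimate. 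The matrix $\FIjoor$ is the corresponding joint information, built blockwise from $c\,\mathcal{I}_0$ and $(1-c)\mathcal{I}_0$ using (\ref{eq:fisherinfotranspar}). This is the usual Wilks-type expansion, and I will take its precise form as a definition of $\FIjoor$, stated in the appendix.

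Next I establish the joint asymptotics of $(U_N,\sqrt N(\hat\eta_1-\eta))$. The two groups are independent, and each satisfies $\sqrt{N_i}(\hat\rho_i-\rho,\hat\eta_i-\eta)\to N(0,\FI)$ by the $\delta$-method. Hence the stacked vector, rescaled by $\sqrt N$ via $c$, is asymptotically a $5$-dimensional Gaussian, where the fifth coordinate $\sqrt N(\hat\eta_1-\eta)$ is a linear function of the others. The Gaussian conditioning formula then gives
\[
U_N\mid \sqrt N(\hat\eta_1-\eta)=s \;\approx\; N\big(s\,b,\ \Sigma\big),
\]
with $b=\operatorname{Cov}(U,\hat\eta_1)/\operatorname{Var}(\hat\eta_1)$ and $\Sigma$ the Schur complement. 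This identifies $\mu(\hat\eta_1)=\sqrt N(\hat\eta_1-\eta)\,b$. Computing $b$ from the entries of $\FI$ is routine algebra. The $\eta$-components of the pooled contrast give the weights $(1-c,-c)$. The $\rho$-components pick up the off-diagonal term $\eta E(1-2\rho)/2$ divided by $-EV/(2R)$, which gives $\eta R(2\rho-1)/V$ times $(1-c)$ or $c$. This matches (\ref{eq:noncentralparcond}) and yields $\Sigma$ as in (\ref{eq:covasymcondright}). Writing the conditional law as $\Sigma^{1/2}Z+\mu$ with $Z$ independent of $\hat\eta_1$ gives $W$ in (\ref{eq:asympconddistr}).

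The main obstacle is turning this heuristic conditional Gaussian limit into the statement (\ref{eq:asymptoticgen}), namely convergence in probability of conditional distributions in the metric $d$. Conditioning on the exact value of a discrete statistic is delicate, and weak convergence of joint laws does not by itself give convergence of conditional laws. My first attempt is a decomposition. Write
\[
U_N=\sqrt N(\hat\eta_1-\eta)\,b+R_N ,
\]
with $R_N$ asymptotically uncorrelated with $\hat\eta_1$. I then aim to show that $R_N$ is asymptotically independent of $\hat\eta_1$ in the conditional sense. For this I would use the multinomial structure directly. Conditional on $\hat\eta_1$, equivalently on a level set of $(n_{1,1},n_{1,2})$ within group~1, I would apply a local limit theorem for multinomials to show that the conditional law of the remaining coordinates of group~1 is asymptotically normal with the Schur-complement covariance. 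Group~2 is unaffected by the conditioning because of independence. This local CLT step, uniform over $\hat\eta_1$ values in a $O_P(N^{-1/2})$ neighbourhood, is where I expect most of the technical work. Finally, the map $u\mapsto u^T\FIjoor u$ is continuous, and the $o_P(1)$ Taylor remainder is also negligible conditionally by a Markov-type argument. The continuous mapping theorem, applied conditionally, then transfers the result to $W$ and gives (\ref{eq:asymptoticgen}).
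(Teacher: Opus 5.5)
Your computational route is the paper's own: the quadratic expansion (\ref{eq:loglikecond}) of the LRT in the $(\rho,\eta)$ coordinates with $\mathcal{I}_{Joint}$, then joint normality of the contrasts and $\sqrt N(\hat\eta_1-\eta)$ by the $\delta$-method, then the Gaussian regression coefficient $\widetilde{\mathcal J}_{1,2}\widetilde{\mathcal J}_{2,2}^{-1}$ with Schur complement $\Sigma$, and finally continuous mapping. The paper never addresses what you call the main obstacle. It passes from (\ref{eq:asympcondmajor}) to (\ref{eq:asympcond}) without argument, i.e.\ it conditions in the Gaussian limit.

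The genuine gap is your proposed repair, which cannot work. With $a=2n_{1,1}+n_{1,2}$ one has exactly $\hat\eta_1=2N_1n_{1,2}/\big(a(2N_1-a)\big)-1$. So a level set of $\hat\eta_1$ contains at most one lattice point for each $a$. Integrality then typically leaves only the observed point and its allele-swap image $a\mapsto 2N_1-a$, because the reduced denominator of $n_{1,2}/(a(2N_1-a))$ is at least $N_1$. Hence $\hat\eta_1$ determines $\hat\rho_1$ up to $\hat\rho_1\mapsto 1-\hat\rho_1$. The swapped point has exponentially small weight unless $\rho=1/2$, and in that case $N(\hat\rho_1-\rho)^2$ is unchanged. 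So the ``remaining coordinates of group 1'' carry no conditional randomness, and no local limit theorem can spread them into a Gaussian with Schur-complement variance. Indeed, read with exact conditioning on the discrete statistic, (\ref{eq:asymptoticgen}) is false. For $\eta=0$, the conditional law of the LRT given $\hat\eta_1$ is asymptotically that of $c\,\chi^2_1\big(2(1-c)N(\hat\rho_1-\rho)^2/R\big)$ plus an independent $c\,\chi^2_1\big((1-c)N\hat\eta_1^2\big)$. Comparing means and variances shows this is never $\chi^2_1$ plus the latter when $c<1$.

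So you must state which weaker reading you prove. One option is to interpret $\mathcal{D}(\cdot|\hat\eta_1)$ as conditioning in the limiting Gaussian experiment. Then your first two steps are already the whole proof, and the joint convergence of $\big(\text{LRT},\sqrt N(\hat\eta_1-\eta)\big)$ they yield is all that is needed for the conditional test to have asymptotic level $\alpha$. The other option is to condition on a coarsened statistic, e.g.\ $\hat\eta_1$ rounded to a grid of mesh $\delta_N$ with $N^{-1}\ll\delta_N\ll N^{-1/2}$. Each cell then contains of order $N\delta_N\to\infty$ admissible values of $n_{1,2}$ for every $a$ in the relevant $\sqrt N$-window, while the mesh is negligible on the $\sqrt N$ scale. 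With that conditioning your local-CLT argument does go through.

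Finally, do not simply assert that your $\mu$ matches (\ref{eq:noncentralparcond}). Since $\hat\rho=c\hat\rho_1+(1-c)\hat\rho_2$ exactly, the two $\rho$-contrasts are $(1-c)$ and $-c$ times $\hat\rho_1-\hat\rho_2$. So the second entry of $\mu$ must be $-c\,\eta R(2\rho-1)/V$ times $\sqrt N(\hat\eta_1-\eta)$, which is consistent with the nonpositive $(1,2)$ entry of $\Sigma$. The displayed $+c$ is a sign slip, and it changes $W$ whenever $\eta\neq0$ and $\rho\neq1/2$.
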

\begin{cor}\label{the:mainresult}
Suppose that $H$ holds and that further the control group is in HWE, i.e. $\eta_1 = 0$. If $c=N_1/N$ is bounded away from $0$ and $1$, then for the asymptotic conditional distribution of the LRT statistic, given $\hat \eta_1$, we have as $N \to \infty$ that 
\begin{equation}\label{eq:asymptotic}
	d\Big(\mathcal{D}\Big(
		2 \big(\mathcal{L}(\hat \pi_{1,1},\hat \pi_{1,2},\hat \pi_{2,1},\hat \pi_{2,2})
			- \mathcal{L}(\hat \pi_{1},\hat \pi_{2}) \big)| \hat \eta_1
		\Big), \mathcal{D}\big((Z_1 + Z_2) | \hat \eta_1\Big)\Big) = o_P(1),
\end{equation}
where $\mathcal{D}\Big(\cdot| \hat \eta_1\Big)$ is the conditional distribution given $\hat \eta_1$, the random variable $Z_1$ and the random vector $(Z_2,\hat \eta_1)$ are independent, $Z_1 \sim \chi^2_1$ and $Z_2|\hat \eta_1 \sim c \chi^2_1\big(N \hat \eta_1^2 (1-c)\big)$. 
\end{cor}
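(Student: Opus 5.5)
I plan to derive the corollary directly from Theorem~\ref{the:mainresultgen} by specializing to $\eta = \eta_1 = 0$ and simplifying the quadratic form $W$ in (\ref{eq:asympconddistr}). Under $H$ both groups share $(\rho,\eta)$. By the observation after (\ref{eq:fisherinfotranspar}), setting $\eta=0$ makes $\FI = \mathrm{diag}(R/2,1)$, and so $(\hat\rho_i,\hat\eta_i)$ are asymptotically independent within each group. The mean vector (\ref{eq:noncentralparcond}) then loses its first two components, since they carry the factor $\eta$, leaving
\[
\mu(\hat\eta_1)=\sqrt N\,\hat\eta_1\,(0,0,1-c,-c)^T .
\]
This vector points along the $\eta$-coordinates, in the direction contrasting $\hat\eta_1$ with the pooled $\eta$ estimate.

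I will work in the coordinates $(\rho_1,\rho_2,\eta_1,\eta_2)$, where the ordering follows $\mu$. At $\eta=0$ the joint information $\FIjoor$ should be block diagonal, with a $\rho$-block $\mathrm{diag}(c,1-c)\cdot 2/R$ and an $\eta$-block $\mathrm{diag}(c,1-c)$, up to the normalization used in the appendix. Because the LRT is the Pearson-type quadratic form in the differences from the pooled fit, I will write it as a sum of two pieces. The first is a $\rho$-part, $Q_\rho \propto N c(1-c)(\hat\rho_1-\hat\rho_2)^2/(R/2)$. The second is an $\eta$-part, $Q_\eta = N c(1-c)(\hat\eta_1-\hat\eta_2)^2$, up to $o_P(1)$. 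I will check this decomposition against the appendix matrix $\Sigma$, which I expect to reduce at $\eta=0$ to the conditional covariance of the centered statistics given $\hat\eta_1$. In that case the $\eta_1$-direction is degenerate and the other three coordinates keep their unconditional variances.

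The $\rho$-part involves only $\hat\rho_1,\hat\rho_2$. These are asymptotically independent of $(\hat\eta_1,\hat\eta_2)$ and are standardized Gaussian, so $Q_\rho \to \chi^2_1$ independently of $\hat\eta_1$; this gives $Z_1$. For the $\eta$-part, $\hat\eta_2$ is independent of $\hat\eta_1$, with $\sqrt{N(1-c)}\,\hat\eta_2 \approx N(0,1)$. Conditional on $\hat\eta_1$, the quantity
\[
\sqrt{N c(1-c)}\,(\hat\eta_1-\hat\eta_2)=\sqrt c\,\bigl(\sqrt{N(1-c)}\,\hat\eta_1-\sqrt{N(1-c)}\,\hat\eta_2\bigr)
\]
is therefore approximately $\sqrt c$ times a $N\bigl(\sqrt{N(1-c)}\hat\eta_1,1\bigr)$ variable. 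Squaring it gives $Z_2\mid\hat\eta_1 \sim c\,\chi^2_1\bigl(N\hat\eta_1^2(1-c)\bigr)$, which is exactly the claimed law. The $o_P(1)$ in the metric $d$ then transfers from (\ref{eq:asymptoticgen}), since I am only rewriting $\mathcal D(W\mid\hat\eta_1)$ as an equal distribution.

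The main obstacle is the bookkeeping of showing that $W$, with the specialized $\Sigma$ and $\FIjoor$, equals $Z_1+Z_2$ in distribution. This requires $\Sigma^{1/2}\FIjoor\Sigma^{1/2}$ to have eigenvalues $1$ (in the $\rho$-contrast direction), $c$ (in the $\eta_2$ direction), and $0$ otherwise, with $\mu$ lying in the $c$-eigendirection. I will first verify this in the block-diagonal $\eta=0$ form, using that the LRT has rank-two asymptotics (degrees of freedom $4-2$). The second eigenvalue has to come out as $c$ rather than $1$ because conditioning on $\hat\eta_1$ removes the $\hat\eta_1$ variability. I expect the main risk of error to lie in matching the parametrization and normalization of the appendix's $\Sigma$.
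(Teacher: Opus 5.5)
Your proposal is correct and follows the paper's route. You specialize Theorem~\ref{the:mainresultgen} at $\eta=0$ (so $E=K=1$, $V=-2R$), use the block-diagonal $\Sigma$ and $\FIjoor=\mathrm{diag}(2c/R,\,2(1-c)/R,\,c,\,1-c)$, and evaluate the limiting quadratic form as $X^2+c\bigl(Y+\sqrt{N(1-c)}\,\hat\eta_1\bigr)^2$. Your reduction to the contrasts $\hat\rho_1-\hat\rho_2$ and $\hat\eta_1-\hat\eta_2$, using independence of the two groups, is just tidier bookkeeping of the paper's explicit $X,Y$ representation, whose $Y$ is essentially $-\sqrt{N(1-c)}\,\hat\eta_2$.
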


Theorem \ref{the:mainresult} as well as Corollary \ref{the:mainresult} are proven in the supplementary appendix Section \ref{app:proof}. 

Next we draw some conclusions and comment on the conditional asymptotic distribution.
\begin{rem}\label{rem:cond_chi2}
The asymptotic conditional distribution only depends on $(\hat \eta_1 - \eta)^2$ in (\ref{eq:asymptoticgen}) and not on the signed version, $\hat \eta_1 - \eta$. In particular, in case of HWE (\ref{eq:asymptotic}) remains true if we condition the LRT statistic on $\hat \eta_1^2$ or on $N c \hat \eta_1^2$, the latter being the exact form of the HWE $\chi^2$-statistic. 
\end{rem}
\begin{rem}
Since the LRT statistic and $\chi^2$ statistic are asymptotically equivalent, (\ref{eq:asymptoticgen}) respectively (\ref{eq:asymptotic}) remain true if we replace the LRT statistic by the $\chi^2$ statistic for $H$, 
\[ \mathcal{X}^2 = \sum_{i=1}^2 \sum_{j=1}^3 \frac{\big(n_{i,j}- N_i \hat \pi_j\big)^2}{N_i \hat \pi_j}.\] 
Subsequently, we always use the $\chi^2$ statistic in our conditional test. 
\end{rem}
\begin{rem}
Generally we perform our conditional test under the assumption of HWE in the control group with the distribution as given in corollary \ref{the:mainresult} for the following reasons.
First, in the misspecified situtation, {\it i.e.} $\eta_1 \ne 0$, we have $N \hat \eta_1 \to \infty$ in probability. Therefore, under hypothesis $H$ of equal genotype distributions Theorem 
\ref{the:mainresultgen} and Corollary \ref{the:mainresult} then show that assuming HWE yields an asymptotically conservative test. Second, the asymptotic distribution in Theorem \ref{the:mainresultgen} depends on $\sqrt{N} \big(\hat \eta_1 - \eta \big)$ which requires prior knowledge of $\eta$. Potentially, in some situtations such as population stratification $\eta$ could be estimated under additional assumptions. However, we do not pursue this further in the present paper.
\end{rem}
\begin{rem}
For computing the asymptotic (conditional) P-value of the test statistics in (\ref{eq:asymptotic}), we have to evaluate the distribution function of the asymptotic conditional distribution. This is somewhat technically involved and we refer to the supplementary appendix Section \ref{asec:comput} for details.
\end{rem}
\subsection{Power}
In this section we derive the local power function of our conditional test and compare it to that of the ordinary likelihood ration (or $\chi^2$-) test for homogeneity. 

\begin{theorem}\label{the:localalt}
Suppose that $\eta_1 = 0$ (HWE in controls) and that $\rho_1 = \rho_0$ for some fixed $\rho_0$. Let $\rho_{2,N} = \rho_0 + \rho_\Delta / \sqrt{N}$ and $\eta_{2,N} = \eta_\Delta / \sqrt{N}$, and suppose that for sample size $N$, genotypes in the case group are multinomially distributed with parameters $\rho_{2,N}$ and $\eta_{2,N}$. Let $P_N$ denote the distribution of the observed genotype counts $X_1$ and $X_2$ with the parameters $(\rho_0,0, \rho_{2,N}, \eta_{2,N})$.  \par
If $c=N_1/N$ is bounded away from $0$ and $1$, then for the asymptotic conditional distribution of the LRT statistic, given $\hat \eta_1$, we have as $N \to \infty$ that 
\begin{equation}\label{eq:asymptoticAlt}
	d\Big(\mathcal{D}_{P_N}\Big(
		2 \big(\mathcal{L}(\hat \pi_{1,1},\hat \pi_{1,2},\hat \pi_{2,1},\hat \pi_{2,2})
			- \mathcal{L}(\hat \pi_{1},\hat \pi_{2}) \big)| \hat \eta_1
		\Big), \mathcal{D}\big((Z_1 + Z_2) | \hat \eta_1\Big)\Big) = o_{P_N}(1),
\end{equation}
	where $\mathcal{D}_{P_N}\Big(\cdot| \hat \eta_1\Big)$ is the conditional distribution given $\hat \eta_1$ under $P_N$, the random variable $Z_1$ and the random vector $(Z_2,\hat \eta_1)$ are independent, $Z_1 \sim \chi^2_1\big(\frac{2\, c\,(1-c)\,\rho_\Delta^2}{\rho_0 (1-\rho_0)} \big)$ and $Z_2|\hat \eta_1 \sim c \chi^2_1\big((1-c)\, (\sqrt{N} \hat \eta_1 - \eta_\Delta)^2 \big)$. 
\end{theorem}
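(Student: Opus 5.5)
The plan is to reduce the statement to Corollary \ref{the:mainresult} by a local-asymptotic argument. First I write the LRT statistic, up to $o_{P_N}(1)$, as a quadratic form in the centred and scaled unrestricted estimators. In the $(\rho,\eta)$ parametrization for the two groups this reads
\[ \Lambda_N = \sqrt{N}\,\big(\hat\rho_1-\hat\rho_2,\ \hat\eta_1-\hat\eta_2\big)\, A\, \sqrt{N}\,\big(\hat\rho_1-\hat\rho_2,\ \hat\eta_1-\hat\eta_2\big)^T + o_{P_N}(1). \]
Here $A$ is the inverse of $\big(\tfrac1c+\tfrac1{1-c}\big)\FI$, evaluated at $(\rho_0,0)$. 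Since $\eta=0$ at this point, (\ref{eq:fisherinfotranspar}) makes $\FI=\mathrm{diag}(R_0/2,1)$, with $R_0=\rho_0(1-\rho_0)$. So $A=c(1-c)\,\mathrm{diag}(2/R_0,1)$, and the quadratic form splits into a $\rho$-part and an $\eta$-part. This is the same Taylor expansion that underlies Theorem \ref{the:mainresultgen} and Corollary \ref{the:mainresult}. Under $P_N$ the true parameters converge to $(\rho_0,0)$, and the triangular-array CLT for multinomials gives uniform consistency, so the expansion remains valid under $P_N$.

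Second, I establish the joint limit under $P_N$. The vector $\sqrt{N}(\hat\rho_1-\rho_0,\hat\eta_1,\hat\rho_2-\rho_{2,N},\hat\eta_2-\eta_{2,N})$ is asymptotically $N\big(0,\mathrm{diag}(c^{-1}\FI,(1-c)^{-1}\FI)\big)$. This follows from Lindeberg–Feller for the multinomial counts together with the $\delta$-method, whose Jacobian converges to the one at $(\rho_0,0)$. Alternatively, contiguity and Le Cam's third lemma give the same result; either route is fine. Hence $\sqrt N(\hat\rho_1-\hat\rho_2)\to N\big(-\rho_\Delta, \tfrac{R_0}{2c(1-c)}\big)$, and this limit is independent of the $\eta$-components. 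The $\rho$-part of $\Lambda_N$ is therefore asymptotically $Z_1\sim\chi^2_1\big(2c(1-c)\rho_\Delta^2/R_0\big)$, independent of $(\hat\eta_1,\hat\eta_2)$. This is the first claim.

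Third, I condition the $\eta$-part on $\hat\eta_1$. Write $U_1=\sqrt N\hat\eta_1\approx N(0,1/c)$ and $U_2=\sqrt N\hat\eta_2$, where $U_2-\eta_\Delta\approx N(0,1/(1-c))$ and is independent of $U_1$. Given $U_1$, the variable $U_1-U_2$ is $N\big(U_1-\eta_\Delta,1/(1-c)\big)$. Multiplying its square by $c(1-c)$ gives the law $c\,\chi^2_1\big((1-c)(U_1-\eta_\Delta)^2\big)$, which is the claimed $Z_2|\hat\eta_1$.

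The main obstacle is making the conditional statement rigorous, that is, showing that $d$ between the conditional laws is $o_{P_N}(1)$ rather than merely that the unconditional laws converge. Conditioning on the lattice-valued $\hat\eta_1$ means that unconditional weak convergence does not automatically transfer. I would reuse the device from the proof of Theorem \ref{the:mainresultgen}. Since $X_1$ and $X_2$ are independent and $\hat\eta_1$ depends only on $X_1$, conditioning on $\hat\eta_1$ leaves $X_2$'s law unchanged. I then need two facts. First, $\hat\rho_1$ given $\hat\eta_1$ is asymptotically normal with the stated variance, uniformly over $\hat\eta_1$ in $O_P(N^{-1/2})$ sets; I would obtain this from a local limit theorem for the multinomial $X_1$. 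Second, under $P_N$ the case-group statistics are asymptotically normal with the shifted mean. The continuous mapping theorem then gives convergence of the conditional laws with the random noncentrality $(1-c)(\sqrt N\hat\eta_1-\eta_\Delta)^2$, and the shift $\eta_\Delta$ is the only change relative to Corollary \ref{the:mainresult}.
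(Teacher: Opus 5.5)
Your first three steps are the paper's proof in slightly streamlined form. The paper also reduces the LRT under $P_N$ to $N\big(\tfrac{c}{a}(\hat\rho_1-\hat\rho)^2+\tfrac{1-c}{a}(\hat\rho_2-\hat\rho)^2\big)+N\big(c(\hat\eta_1-\hat\eta)^2+(1-c)(\hat\eta_2-\hat\eta)^2\big)+o_{P_N}(1)$ with $a=\rho_0(1-\rho_0)/2$. This is your $c(1-c)\,\mathrm{diag}(2/R_0,1)$ form once you substitute $\hat\rho=c\hat\rho_1+(1-c)\hat\rho_2$ and $\hat\eta\approx c\hat\eta_1+(1-c)\hat\eta_2$. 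The paper then derives the shifted joint normal limit with mean $\mu_\Delta$ by the $\delta$-method, uses the diagonal information at $\eta=0$ for asymptotic independence of the two squares, and treats the $\eta$-part by Gaussian conditioning. You work with $(\hat\rho_1-\hat\rho_2,\hat\eta_1-\hat\eta_2)$ and the independence of the two samples, whereas the paper conditions $\sqrt N(\hat\eta_2,\hat\eta)$ on $\hat\eta_1$ with the singular $\Sigma_{cond}$; yours is a harmless simplification. At the paper's level of rigour this is the whole argument: the ``conditional distribution given $\hat\eta_1$'' is the conditional law computed in the joint Gaussian limit.

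The rigorization in your last paragraph, however, would fail. Conditioning on the exact, discrete value of $\hat\eta_1$ does not leave $\sqrt N(\hat\rho_1-\rho_0)$ asymptotically $N(0,R_0/(2c))$; it essentially fixes it.

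\begin{itemize}
\item With $m=2n_{1,1}+n_{1,2}$ one has $\hat\eta_1+1=2N_1n_{1,2}/\big(m(2N_1-m)\big)$. A configuration with allele count $m+k$ therefore has the same $\hat\eta_1$ only if $D=m(2N_1-m)/\gcd\big(n_{1,2},m(2N_1-m)\big)$ divides $k(2N_1-2m-k)$.
\item For typical counts the gcd is $O_P(1)$, so $D$ is of order $N^2$. Since $|k(2N_1-2m-k)|\le N_1^2$, the only admissible $k$ are $k=0$, the allele swap $k=2N_1-2m$, and $O_P(1)$ values with $|k|$ of order $N$, which carry exponentially small conditional weight.
\item For instance, with $N_1=1000$ the value of $\hat\eta_1$ at $(251,497,252)$ is attained only there and at $(252,497,251)$.
\end{itemize}

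So, with probability tending to one, $\hat\eta_1$ determines $X_1$ up to the allele swap, and no local limit theorem can give a nondegenerate conditional law for $\hat\rho_1$. Your own second step then shows what happens: given $X_1$, the $\rho$-part has law $c\,\chi^2_1\big(2(1-c)(\sqrt N(\hat\rho_1-\rho_0)-\rho_\Delta)^2/R_0\big)$. The law of $Z_1$ appears only after integrating $\sqrt N(\hat\rho_1-\rho_0)\sim N(0,R_0/(2c))$ out. Hence, under literal exact conditioning, the displayed limit is not the right one; the statement holds only in the Gaussian-limit sense the paper actually proves.

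If you want a genuine finite-$N$ version, condition instead on $\hat\eta_1$ rounded to a grid of mesh $\epsilon_N$ with $N\epsilon_N\to\infty$ and $\sqrt N\epsilon_N\to 0$. Each cell then contains of order $N\epsilon_N$ admissible values of $n_{1,2}$ for every $m$ in the bulk, so your local-limit argument goes through. The coarsening is also invisible on the $\sqrt N$ scale.
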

In contrast, the ordinary likelihood ratio statistic or $\chi^2$-statistic is asymptotically distributed under the local alternatives of Theorem \ref{the:localalt} as 
\[ 2 \big(\mathcal{L}(\hat \pi_{1,1},\hat \pi_{1,2},\hat \pi_{2,1},\hat \pi_{2,2})
			- \mathcal{L}(\hat \pi_{1},\hat \pi_{2}) \big) \stackrel{P_N}{\to} \chi^2_2\Big( c\,(1-c)\,\big(\frac{2\, \rho_\Delta^2}{\rho_0 (1-\rho_0)} + \eta_\Delta^2\big)\Big).\]
In Figures ?? we compare local asymptotic power curves for both tests for distinct values of $c$, where we take $ \sqrt{N} \hat \eta_1 \sim N(0,1)$ in the conditional case.

\subsection{Covariates}\label{sec:covariates}

In many studies additional covariates are observed and adjusting genetic effects for these covariates is of interest. Here we propose to stratify the sample according to combinations of ordinal covariates and to use a stratified likelihood for estimation and inference. Continuous covariates have to be categorized prior to being used in such an analysis. We assume that covariate combinations are labeled as $k = 1, \ldots, K$, follow a multinomial distribution with parameters $\zeta = (\zeta_1, \ldots, \zeta_K)$ and counts are given by $N_i = N_{i,1} + \ldots + N_{i,K}$ which are distributed across genotypes $j = 1,2,3$ as $N_{i,k} = \sum_{j=1}^3 n_{i,j,k}$. Total data is given by $\mathbf{n} = (\mathbf{n}_1, \ldots, \mathbf{n}_K)$ with $\mathbf{n}_k = (n_{1,1,k}, \ldots, n_{2,3,k} )$. We choose corresponding parameters $\pi_{i,j,k}$, $\pi = (\pi_{1,1,1}, \ldots, \pi_{2,2,K})$ and get the covariate-stratified likelihood $\mathcal{L}_c$ as
\begin{eqnarray*}
	\mathcal{L}_c(\mathbf{n}; \pi, \zeta)
		& = & \sum_{k=1}^K \zeta_k \left[
			\sum_{i=1,2} \sum_{j=1,2} n_{i,j,k} \log \pi_{i,j,k} + \sum_{i=1,2} n_{i,3,k} \log \pi_{i,3,k}
			\right]\\
		& = & \sum_{k=1}^K \zeta_k \mathcal{L}( \mathbf{n}_k ; \pi_{1,1,k},\pi_{1,2,k}, \pi_{2,1, k},\pi_{2,2, k}),
\end{eqnarray*}
where $ \mathcal{L}( \mathbf{n}_k ; \pi_{1,1},\pi_{1,2}, \pi_{2,1},\pi_{2,2})$ denotes the unstratified likelihood for counts $\mathbf{n}_k$. Consider the null hypothesis 
\[ H: \pi_{1,j,k} = \pi_{2,j,k} =: \pi_{j,k} \ \  j = 1, 2, 3; k = 1, ..., K.\]
Set $\pi_0 = (\pi_{1,1}, \ldots \pi_{3,K})$, $(\rho_{i, k}, \eta_{i, k}) = T(\pi_{i, 1, k}, \pi_{i,2, k})$, and  $\eta_1 = (\eta_{1, 1}, \ldots, \eta_{1, k})$, and let $\hat \pi$, $\hat \pi_0$, $\hat \eta_1$ and $\hat \zeta$ be the corresponding vectors of MLEs. 
It follows from Corollary \ref{the:mainresult} that under HWE, the LRT for  is distributed as the following mixture
\begin{eqnarray*}
	d\left(
		\mathcal{D}\Big(
		2 (\mathcal{L}_c(\hat \pi, \hat{\zeta})
			- \mathcal{L}_c(\hat \pi_0, \hat{\zeta}
		)) | \hat \eta_1 \Big),
		\mathcal{D}\Big(\sum_{k=1}^K \hat{\zeta_k} \big(Z_{1,k} + Z_{2,k}(\hat{\eta}_{1, k})\big) \Big)
	\right) = o_P(1),
\end{eqnarray*}

where $Z_{1,k}$ and $\big(Z_{2,k},\hat{\eta}_{1, k}\big)$, $k=1, \ldots, K$, are all independent, $Z_{1,k} \sim \chi^2_1$, $Z_{2,k}(\hat{\eta}_{1, k}) = Z_{2,k}|\hat{\eta}_{1, k}  \sim c_k \chi^2_1\big(N_k \hat \eta_1^2 (1-c_k)\big)$, with $N_k = N_{1,k} + N_{2,k}$, and $c_k = N_{1,k}/N_k$. 

\section{Simulation study}\label{sec_simulation}

In this section, the finite sample properties of the proposed testing procedure are scrutinized. The software package \emph{R} version 2.15.1 was used in all computations \ct{r_development_team_2008}.

\subsection{Single locus simulations}

Table \ref{sim_single} shows simulations comparing the proposed test statistic with the Pearson test and the EHWE test \ct{chen_2007}, which  compares genotype frequencies in cases with an HWE-expectation based on an allele frequency estimation in controls, thereby ignoring the parameter $\eta_1$ entirely. In the $(\rho, \eta)$ parametrization the test is a likelihood ratio (LR) test with null hypothesis 
\[ \Theta_0: \rho_1 = \rho_2, \eta_1 = \eta_2 = 0 \qquad \text{and alternative hypothesis}\qquad  \Theta_1: \eta_1 = 0, (\rho_1 \ne \rho_2 \lor \eta_2 \ne 0),\]
where $(\rho_1, \eta_1)$ are the parameters for controls and $(\rho_2, \eta_2)$ for cases.\par
Table \ref{sim_single} shows in the first section that all tests maintain the $\alpha$ level faithfully under the null hypothesis $H$ with correct specification $\eta_1=0$, as is expected for the sample size of $10^3$ for each group. The second section shows results under the alternative hypothesis, i.e. $H$ does not hold, but with correct specification $\eta_1=0$. 
In cases where $\eta_2 = 0$ as well and the $\rho$s differ, our conditional test is the most powerful test.
The most powerful test in cases where $\rho_1 = \rho_2$ and $0 = \eta_1 \ne \eta_2$ is the EHWE test. 
Generally, the Pearson test can be considered as the least powerful test. This result is expected as the Pearson test makes the least assumptions of the compared tests.\par
Our conditional test and the EHWE test both assume HWE in controls and it appears that they have mutual strengths. However, if we misspecify our test scenario by setting $\eta_1 \ne 0$ some interesting results are apparent from the third and forth sections in Table \ref{sim_single}. In the third section for situations where $\rho_1 = \rho_2, \eta_1 = \eta_2$ (thus $H$ holds true) but $\eta_1 \ne 0$ the conditional test is conservative and does not exhaust the $\alpha$-level of $0.05$ whereas EHWE is anticonservative and exceeds the $\alpha$-level by large margins.\par
In the GWAS situtation this implies that the EHWE test can assign undesirably small P-values to SNPs for which controls deviate from HWE, thus inappropriately increasing their importance, while our conditional test assigns larger P-values thus decreasing their importance.

\subsection{GWASs simulations}\label{sec:gwassim}

We conducted simulations of plausible GWAS scenarios in order to assess the impact of our conditional test on the analysis of such studies. Our single locus simulations indicate that our test should have favorable properties as compared to the EHWE and the Pearson test in a GWAS scenario.

We constructed an alternative based on a logistic penetrance function. If $K$ loci influence disease status, we have:
\begin{eqnarray}\label{sim_penetrance}
	P(Y = 1 | g_1, ..., g_K )
	& = &
		\left(
			1 + \exp\left\{-\left(\mu + \sum_{i = 1}^{K} \beta_i x_i \right)\right\}
		\right)^{-1}
		=: Y(\mathbf{g}; \beta).
\end{eqnarray}

Here, $x_i$ is a score assigned to a genotype, $\beta_i$ is the effect size of locus $i$ and $\mu$ is a baseline penetrance. For genotypes 0 and 2 we assign the respective scores 0 and 1 and for genotype 1 we assign $0, \frac{1}{2}, 1$ to respectively simulate a recessive, additive and dominant mode of inheritance. The prevalence $P(Y = 1)$ of the disease is then given by:

\begin{eqnarray}\label{sim_prevalence}
	\alpha
	& = & P(Y = 1) = \sum_{(g_1,.., g_K) \in \mathbf{G}}
		Y(\mathbf{g}; \beta) P(G = \mathbf{g})
	 = 	\sum_{(g_1,.., g_K) \in \mathbf{G}} Y(\mathbf{g}; \beta) \prod_{i=1}^{K} \rho_{i,g_i},
\end{eqnarray}

assuming that $\mathbf{g} = (g_1,.., g_K), \beta = (\beta_1, ..., \beta_K)$ and $\rho_{i,g_i}$ denotes the genotype frequency of genotype $g_i$ at locus $i$. In all simulations we assume all SNPs to be independent for a random sample from the population.
In order to pick an alternative hypothesis, we deterministically chose some of the parameters and drew the others randomly. Deterministic parameters were $\alpha, \mu, \beta_1, \rho_1$ and the score assignment. $K$, $\beta_2, ..., \beta_K, \rho_2, ..., \rho_K$ were drawn randomly as follows. $\beta_i \sim U(1.1, \beta_1), i = 2, ...$, $\rho_i \sim U(\rho_1/2, \rho_1)$ subject to the constraint $P(Y = 1) = \alpha$.\par
We give further details about the procedure in the supplementary material Section \ref{asec:gwassim}. In summary, we created alternatives for which we know prevalence, maximal effect size and baseline penetrance with freedom in how the prevalance was distributed over effects of additional loci, mimicking the suspected polygenic nature of most complex genetic diseases. From the allele frequencies expected genotype frequencies under HWE were computed and genotypes were drawn according to these distributions for cases and controls for SNPs under the null. For SNPs under the alternative, genotypes for controls were drawn likewise and 
genotypes for cases were drawn according to the penetrance model above keeping only individuals for which phenotype status was realized as diseased.

\subsection{Results of GWASs simulation}

Table \ref{sim_gwas1} summarizes results of the GWASs simulations which compare our conditional test (indicated by $^*$s), EHWE (indicated by $^\bullet$s) and the Person test (plain notation). We chose several measures to judge the performance of the tests. As ranking is important in practical analysis we show rank criteria in columns $Q_x$. We pick the SNP at quantile $x$ according to P-value among SNPs under the alternative and average the rank statistic among all SNPs of these selected SNPs. We also evaluated the number of rejected SNPs denoted by $m_\alpha$ for a false discovery rate (FDR) criterion at level $\alpha$ \ct{benjamini_1995}. Finally, $\Phi_\alpha$ is the power of the FDR procedure, denoting the probability of rejecting any SNP in a GWAS. For all simulations we assumed the prevalence to be $0.1$, sample size to be $2 \times 10^3$ in both groups and averaged all numbers across $10^3$ repetitions after picking an alternative.\par

With respect to ranking (columns $Q_x$), for the additive and the dominant models the conditional test assigns SNPs simulated under the alternative lowest (and hence most favorable) ranks, whereas under the recessive  model EHWE gives lowest ranks. Similarely, the conditional procedure had best power (columns $\Phi_\alpha$) for the dominant and additive models as compared to the other tests and rejected the biggest number of SNPs in these situations. In the recessive case, EHWE had best power as well as most rejected SNPs.

\section{Data Analysis}\label{sec_data}

We applied our testing procedure to a data set previously published \ct{hillmer_2008} and compared it with the same tests as used in the simulation section. The data set is comprised of 296 males with androgenic alopecia and 383 random controls. Details about the data set are given in the initial publication \ct{hillmer_2008}. The data set was pruned for SNPs with a minor allele frequency (MAF) $< 0.1$ ({\it i.e.} $\min\{\rho, 1 - \rho\} < 0.1$) and call rates for SNPs $< 0.9$. 

Quantile-Quantile (Q-Q) plots were used to assess assumptions on underlying distributions (supplementary appendix, Section \ref{asec:dataanal}, Figure 1). The plot for the HWE test in controls (part (A)) shows notable deviation from the expected uniform distribution for expected p-values $< 10^{-2}$. Apart from this deviation in the tail, the HWE statistic shows a close fit with expected values. This type of deviation most likely excludes systematic error such as population stratification (different mixtures of cases and controls as drawn from several underlying unique distributions) or cryptic relatedness (different correlation structure of samples across groups) \ct{devlin_1999,voight_2005} as in principle all SNPs should be affected. To underpin this observation quantitatively, we computed the variance inflation factor described in \ctt{devlin_1999} which was 0.99 and 1.01 for the median and mean based estimates, respectively, in accordance with the expectation of no confounding. A first thing to note is that an arbitrary cut-off for HWE p-
values, say $10^{-3}$ would not remove all deviating SNPs in the dataset. As our test should not be sensitive to non-systematic deviations from HWE, we did not remove any SNPs from further analysis. It is informative to see how the different tests behave in this situation. We subjected the data to our conditional test, the Pearson and the EHWE tests.\\
Part (B), (C) and (D) of this figure show Q-Q-plots for the conditional, Pearson and EHWE tests, respectively. All Q-Q-plots show an excess of low p-values as compared to the expected uniform distribution under the null hypothesis. The EHWE test shows excessive anti-conservative behavior whereas the conditional and Pearson tests both show close correspondence to a uniform distribution for p-values $> 10^{-3.5}$. The data set contains a strong signal on chromosome 20 such that some tail deviation is expected.

Table \ref{data_tabC} shows 30 SNPs ranked according to p-values of the conditional test. Overlap with EHWE is weak except for the top-6 SNPs but the Pearson test shows some agreement. Skipping the first six SNPs, the top ranked SNP according to the conditional test has Rank $78$ for the Pearson test. From the top 10 list according to Pearson rank 5, 6 and 9 are missing from table \ref{data_tabC}. These observations can be described more quantitatively by estimating the correlation between p-values of the test statistics (supplementary appendix, Section \ref{asec:dataanal}, Table 2). Pearson and the conditional test show a correlation of $0.95$ whereas their correlation with EHWE is $< 0.76$. The conditional test is the only test showing a negative correlation with HWE ($-0.01$), EHWE has correlation $\sim 0$ with HWE (as it ignores the HWE statistic in controls) and Pearson has a correlation of $0.24$ with HWE. The negative correlation is not significant in this data set although it is intuitively plausible to assume that HWE should 
have a strictly negative correlation with the conditional test. 

For EHWE the corresponding p-values for HWE in a top list are located in the range $(10^{-9}, 10^{-1})$ with a majority of SNPs having p-values $< 10^{-2}$ (supplementary appendix, Section \ref{asec:dataanal}, Table 1). This reflects the sensitivity of EHWE to misspecification as was shown in the simulation section. It is important to note that EHWE has practically no overlap with either the conditional or Pearson test as determined by SNP rankings. The top-30 list for EHWE contains HWE outliers with p-values $< 10^{-6}$ (ranks 2, 5, 9 in the top-10) whereas the conditional and Pearson tests rank them at ranks $> 200,000$ and $> 3,500$, respectively.

Comparing Pearson and the conditional test, both agreed on the best SNP. The corresponding P-value was $1.2 \times 10^{-7}$ for the conditional test compared with the higher P-value of $4.2 \times 10^{-7}$ for the Pearson test.
eq:fisherinfotranspar

\section{Discussion}\label{sec_discussion}

In this paper we have proposed a new testing procedure for case-control studies in a GWAS setting. One major advantage of our test is that is removes the need for two separate tests (HWE in controls and a test for homogeneity) and thereby eliminates the need for an arbitrary cutoff that is normally introduced to avoid false positives. In order to get a balanced view with respect to the impact of HWE it is necessary to additionally consider the P-values of the HWE goodness-of-fit test as well as the ordinary Pearson test. Also the distribution of the P-values of the HWE goodness-of-fit test should be evaluated by means of, for example, Q-Q plots or the inflation factor of the study in order to detect strong and systematic deviations.

An interesting property of our test is that it is more powerful than competing tests in many situations, in particular in the analysis of GWASs. Intuitively, this result is expected from the distribution of our test statistic. The non-centrality parameter relates linearly to the HWE statistic and therefore gives more mass to the tail for a SNP with a large HWE statistic as compared to an SNP that shows close correspondence with HWE. Therefore, the test has lower critical values for SNPs in HWE as compared to SNPs which deviate from HWE. Loosely speaking, our conditional test is more likely to reject ''promising'' SNPs in HWE and less likely to reject ''atypical'' SNPs with departures from HWE.\par

A more direct approach to address the assumption of HWE in controls, the EHWE test, is generally not effective in the analysis of GWASs. Instead of modeling HWE, EHWE completely ignores the HWE statistic in controls by constraining the parameter $\eta_1$ to zero. While doing so may improve power in single locus simulations under correct specification (i.e. $\eta_1 =0$), it typically decreases power in the GWAS setting and is additionally very sensitive to model misspecification. 


We have analyzed a data set that shows some deviation from the expected distribution for the HWE statistic in controls, that affects about 1\% of all SNPs and could not be attributed to confounding according to the variance inflation estimator of \ctt{devlin_1999}. We chose to re-analyze this data set in its published form as our procedure should perform well even in the presence of non-systematic HWE deviations, {\it i.e.} deviations not attributable to substructure or cryptic relatedness. As the data set contains a very strong signal, it was not surprising to find overlap for the most strongly associated SNPs for all tests that were compared. Except for these SNPs our conditional test and the Pearson test show moderate agreement but no agreement is seen between EHWE and the two other tests. In practical situations the re-ranking of SNPs could have strong consequences as typically the top 30 to 300 SNPs are re-analyzed in independent samples.\par

In conclusion, our test offers a unified and more powerful approach to association testing as compared to standard procedures. These two aspects seem to warrant wide adoption and we aim to integrate our method into standard software packages.\par

\section{Acknowledgments}

We are grateful to Ruth Pfeiffer, Constantin Strauch and Helmut Schäfer for a critical appraisal of the manuscript. Hajo Holzmann gratefully acknowledges financial support from the DFG, grant HO 3260/3-1, the Claussen-Simon Stiftung and from the Landesstiftung Baden-W\"urttemberg (``Juniorprofessorenprogramm''). The alopecia data set was kindly provided by Felix Brockschmidt, Michael Steffens and Markus N\"othen.  


\bibliographystyle{plainnat}

\bibliography{literature}

{\singlespacing

\begin{longtable}{|r|rr|rr|rrr|}
\caption{Single locus simulations comparing the level and the power for the conditional test ($ \Phi^* $), the EHWE test ($ \Phi^\bullet $) and the Pearson $\chi^2$-test ($\Phi$) for testing the hypothesis of homogeneity $H$. Correctly specified models ($\eta_1=0$) and misspecified models $(\eta_1 \not=0)$ were investigated. $10^3$ controls and $10^3$ cases were drawn from the distributions $(\rho_1, \eta_1)$ and $(\rho_2, \eta_2)$, respectively. The significance level was chosen as $\alpha = .05$ and the level respectively the power were estimated from $10^4$ simulations.}\label{sim_single}\\
\hline
& $\rho_1$ & $\eta_1$ & $\rho_2$ & $\eta_2$ & $\Phi_{.05}^*$ & $\Phi_{.05}^\bullet$ & $\Phi_{.05}$ \\
\hline \hline
Level  & 0.10 & 0.00 & 0.10 & 0.00 & 0.048 & 0.051 & 0.049 \\
& 0.50 & 0.00 & 0.50 & 0.00 & 0.053 & 0.053 & 0.053 \\
\hline \hline
Power & 0.10 & 0.00 & 0.10 & 0.05 & 0.199 & 0.320 & 0.173 \\
& 0.50 & 0.00 & 0.50 & 0.05 & 0.179 & 0.273 & 0.155 \\
& 0.10 & 0.00 & 0.10 & 0.10 & 0.772 & 0.990 & 0.791 \\
& 0.50 & 0.00 & 0.50 & 0.10 & 0.575 & 0.820 & 0.507 \\
& 0.10 & 0.00 & 0.11 & 0.00 & 0.149 & 0.138 & 0.135 \\
& 0.50 & 0.00 & 0.51 & 0.00 & 0.088 & 0.084 & 0.082 \\
& 0.10 & 0.00 & 0.12 & 0.00 & 0.457 & 0.426 & 0.423 \\
& 0.50 & 0.00 & 0.52 & 0.00 & 0.209 & 0.188 & 0.188 \\
\hline \hline
Misspec.& 0.10 & 0.05 & 0.10 & 0.05 & 0.029 & 0.313 & 0.049 \\
Level & 0.50 & 0.05 & 0.50 & 0.05 & 0.037 & 0.275 & 0.050 \\
& 0.10 & -0.05 & 0.10 & -0.05 & 0.043 & 0.255 & 0.049 \\
& 0.50 & -0.05 & 0.50 & -0.05 & 0.041 & 0.273 & 0.051 \\
\hline \hline
Misspec. & 0.10 & 0.05 & 0.10 & 0.10 & 0.195 & 0.990 & 0.332 \\
Power & 0.50 & 0.05 & 0.50 & 0.10 & 0.162 & 0.819 & 0.154 \\
& 0.10 & -0.05 & 0.10 & 0.00 & 0.048 & 0.053 & 0.136 \\
& 0.50 & -0.05 & 0.50 & 0.00 & 0.052 & 0.052 & 0.154 \\
& 0.10 & 0.05 & 0.12 & 0.05 & 0.277 & 0.653 & 0.444 \\
& 0.50 & 0.05 & 0.52 & 0.05 & 0.141 & 0.425 & 0.195 \\
& 0.10 & -0.05 & 0.14 & -0.05 & 0.884 & 0.967 & 0.940 \\
& 0.50 & -0.05 & 0.54 & -0.05 & 0.469 & 0.761 & 0.590\\
\hline
\end{longtable}

\begin{sidewaystable}
\begin{longtable}{rrrrr|rrr|rrr|rrr|rrr}
\caption{Simulations of GWAs under different scenarios based on $10^3$ simulations. $OR_1 = exp(\beta_1)$ is the maximal effect size of all loci, $\rho_1$ is the corresponding allele frequency, $\mu$ is the baseline penetrance and $M$ denotes the mode of inheritance and $N$ is the number of SNPs under the alternative. Sample size was $2 \times 10^3$ for controls and cases. $Q_x$ is the average rank of the locus under the alternative with the $x$-quantile according to p-value among SNPs under the alternative. $m_\alpha$ is the average number of rejected SNPs according to an FDR criterion at level $\alpha$ and $\Phi_\alpha$ is the corresponding power (for details see text). A $^*$ represents results for the conditional test, $^\bullet$ represents EHWE and a plain notation indicates the Pearons test.}\label{sim_gwas1}\\
$OR_0$ & $\rho_1$ & $\mu$ & M & $N$ & $Q_0^*$ & $Q_0^\bullet$ & $Q_0$ & $Q_{.25}^*$ & $Q_{.25}^\bullet$ & $Q_{.25}$ & $m_{.05}^*$ & $m_{.05}^\bullet$ & $m_{.05}$ & $\Phi_{.05}^*$ & $\Phi_{.05}^\bullet$ & $\Phi_{.05}$ \\
\hline
1.35 & 0.10 & 0.050 & D & 19 & 12.2 & 20.7 & 28.0 & 673.5 & 907.4 & 1094.3 & 0.1 & 0.1 & 0.1 & 0.132 & 0.109 & 0.075 \\
1.35 & 0.10 & 0.010 & D & 55 & 2.0 & 3.2 & 4.1 & 494.6 & 677.8 & 829.7 & 0.4 & 0.3 & 0.2 & 0.261 & 0.206 & 0.138 \\
1.62 & 0.15 & 0.050 & D & 7 & 1.0 & 1.0 & 1.0 & 2.6 & 2.6 & 2.7 & 4.1 & 4.2 & 3.6 & 0.997 & 0.995 & 0.993 \\
1.62 & 0.15 & 0.010 & D & 25 & 1.0 & 1.0 & 1.0 & 7.1 & 7.2 & 7.5 & 9.6 & 9.4 & 7.6 & 1.000 & 1.000 & 0.997 \\
1.35 & 0.20 & 0.050 & D & 10 & 4.2 & 5.1 & 9.8 & 113.4 & 123.0 & 209.2 & 0.4 & 0.5 & 0.3 & 0.335 & 0.333 & 0.230 \\
1.35 & 0.20 & 0.010 & D & 30 & 1.1 & 1.3 & 1.5 & 50.7 & 58.3 & 100.2 & 1.3 & 1.3 & 0.8 & 0.637 & 0.596 & 0.478 \\
1.62 & 0.20 & 0.050 & D & 5 & 1.0 & 1.0 & 1.0 & 2.0 & 2.0 & 2.1 & 3.6 & 3.7 & 3.2 & 0.999 & 0.999 & 0.990 \\
1.62 & 0.20 & 0.010 & D & 18 & 1.0 & 1.0 & 1.0 & 5.2 & 5.3 & 5.3 & 11.8 & 12.2 & 10.5 & 1.000 & 1.000 & 1.000 \\ \hline
1.50 & 0.10 & 0.050 & A & 22 & 94.0 & 159.3 & 168.4 & 3192.3 & 4448.9 & 4468.2 & 0.0 & 0.0 & 0.0 & 0.021 & 0.016 & 0.010 \\
1.50 & 0.10 & 0.010 & A & 69 & 12.6 & 24.7 & 24.6 & 2839.1 & 4098.8 & 4141.8 & 0.1 & 0.0 & 0.0 & 0.050 & 0.029 & 0.022 \\
1.80 & 0.15 & 0.050 & A & 10 & 2.1 & 4.1 & 4.0 & 47.5 & 69.3 & 75.8 & 0.7 & 0.5 & 0.5 & 0.462 & 0.368 & 0.353 \\
1.80 & 0.15 & 0.010 & A & 32 & 1.0 & 1.2 & 1.2 & 31.9 & 55.9 & 58.3 & 2.0 & 1.3 & 1.3 & 0.772 & 0.625 & 0.616 \\
1.50 & 0.20 & 0.050 & A & 11 & 36.1 & 62.9 & 63.8 & 624.1 & 944.4 & 948.4 & 0.1 & 0.1 & 0.1 & 0.103 & 0.075 & 0.064 \\
1.50 & 0.20 & 0.010 & A & 33 & 2.6 & 4.9 & 5.1 & 278.7 & 492.8 & 489.4 & 0.3 & 0.2 & 0.2 & 0.251 & 0.160 & 0.163 \\
1.80 & 0.20 & 0.050 & A & 7 & 1.1 & 1.4 & 1.4 & 8.0 & 15.0 & 14.0 & 1.6 & 1.2 & 1.2 & 0.766 & 0.667 & 0.647 \\
1.80 & 0.20 & 0.010 & A & 24 & 1.0 & 1.0 & 1.0 & 11.0 & 16.9 & 17.3 & 3.5 & 2.5 & 2.5 & 0.929 & 0.841 & 0.840 \\ \hline
1.65 & 0.20 & 0.050 & R & 59 & 13.6 & 1.8 & 30.0 & 2487.7 & 500.3 & 4494.6 & 0.1 & 0.6 & 0.0 & 0.056 & 0.373 & 0.031 \\
1.65 & 0.20 & 0.010 & R & 101 & 2.4 & 1.0 & 5.0 & 1243.9 & 191.1 & 2464.1 & 0.2 & 3.0 & 0.2 & 0.198 & 0.858 & 0.113 \\
1.98 & 0.25 & 0.050 & R & 25 & 1.0 & 1.0 & 1.2 & 24.0 & 7.3 & 63.6 & 2.3 & 9.0 & 1.3 & 0.847 & 1.000 & 0.670 \\
1.98 & 0.25 & 0.010 & R & 85 & 1.0 & 1.0 & 1.0 & 73.9 & 23.9 & 171.8 & 4.7 & 21.4 & 2.4 & 0.943 & 1.000 & 0.787 \\
1.65 & 0.30 & 0.050 & R & 25 & 1.3 & 1.0 & 2.2 & 90.9 & 11.6 & 209.3 & 0.9 & 4.4 & 0.6 & 0.558 & 0.970 & 0.393 \\
1.65 & 0.30 & 0.010 & R & 84 & 1.0 & 1.0 & 1.3 & 216.4 & 39.7 & 475.1 & 1.6 & 9.1 & 0.9 & 0.652 & 0.993 & 0.478 \\
1.98 & 0.30 & 0.050 & R & 19 & 1.0 & 1.0 & 1.0 & 6.9 & 5.5 & 11.6 & 4.5 & 10.9 & 3.1 & 0.986 & 1.000 & 0.930 \\
1.98 & 0.30 & 0.010 & R & 58 & 1.0 & 1.0 & 1.0 & 18.6 & 15.3 & 28.2 & 10.7 & 27.9 & 6.7 & 0.999 & 1.000 & 0.990 \\
\end{longtable}
\end{sidewaystable}

\newpage

\begin{longtable}{lrrrrrrr}
\caption{Analysis of an alopecia data set. Results are ordered according to P-values of the conditional test and the first 30 SNPs are shown. Columns in parentheses show order statistics. {\it C} denotes the conditional test, {\it P} denotes the Pearson test and {\it E} denotes EHWE. $\lambda$ denotes the NCP of the conditional test.}\label{data_tabC}\\
SNP & C & P & (P) & EHWE & (E) & HWE & $\lambda$ \\
\hline
rs1998076 & 1.21e-07 & 7.15e-07 & 1 & 4.23e-07 & 14 & 5.9e-01 & 0.24 \\
rs6075852 & 2.50e-07 & 1.47e-06 & 2 & 9.65e-07 & 18 & 6.6e-01 & 0.17 \\
rs2180439 & 2.53e-07 & 1.54e-06 & 3 & 1.01e-06 & 20 & 6.6e-01 & 0.17 \\
rs201571 & 8.37e-07 & 4.68e-06 & 7 & 2.44e-06 & 23 & 5.8e-01 & 0.26 \\
rs6113491 & 8.75e-07 & 1.78e-06 & 4 & 3.13e-06 & 25 & 1.2e-01 & 2.01 \\
rs6137444 & 2.17e-06 & 1.02e-05 & 8 & 3.63e-06 & 27 & 6.5e-01 & 0.18 \\
rs10992241 & 3.64e-06 & 2.00e-04 & 78 & 1.08e-05 & 45 & 9.3e-01 & 0.01 \\
rs6137473 & 3.83e-06 & 1.60e-05 & 10 & 1.38e-05 & 50 & 4.9e-01 & 0.41 \\
rs6047768 & 7.16e-06 & 2.43e-05 & 15 & 2.12e-05 & 63 & 4.0e-01 & 0.59 \\
rs2207878 & 7.51e-06 & 2.25e-05 & 13 & 2.05e-05 & 60 & 3.2e-01 & 0.84 \\
rs6113424 & 7.87e-06 & 3.06e-05 & 19 & 2.59e-05 & 75 & 5.1e-01 & 0.37 \\
rs201543 & 7.88e-06 & 3.06e-05 & 17 & 2.59e-05 & 73 & 5.1e-01 & 0.37 \\
rs6035995 & 8.04e-06 & 3.04e-05 & 16 & 2.56e-05 & 71 & 4.7e-01 & 0.44 \\
rs2024885 & 8.08e-06 & 3.06e-05 & 18 & 2.59e-05 & 74 & 5.1e-01 & 0.37 \\
rs1884592 & 9.28e-06 & 3.89e-05 & 26 & 3.27e-05 & 85 & 5.9e-01 & 0.25 \\
rs6137476 & 9.49e-06 & 3.89e-05 & 28 & 3.27e-05 & 87 & 5.9e-01 & 0.25 \\
rs1555264 & 9.53e-06 & 3.64e-05 & 25 & 3.14e-05 & 82 & 5.1e-01 & 0.37 \\
rs6047769 & 9.57e-06 & 3.62e-05 & 23 & 3.09e-05 & 81 & 5.0e-01 & 0.38 \\
rs2328683 & 9.82e-06 & 4.08e-05 & 31 & 3.54e-05 & 93 & 6.5e-01 & 0.18 \\
rs6047731 & 9.84e-06 & 3.89e-05 & 27 & 3.27e-05 & 86 & 5.9e-01 & 0.25 \\
rs4805229 & 1.15e-05 & 4.82e-05 & 33 & 3.60e-05 & 94 & 9.8e-01 & 0.00 \\
rs655683 & 1.53e-05 & 2.13e-05 & 11 & 5.93e-06 & 36 & 1.8e-01 & 1.49 \\
rs927059 & 2.05e-05 & 7.34e-05 & 43 & 6.47e-05 & 109 & 5.3e-01 & 0.33 \\
rs6106434 & 2.05e-05 & 8.50e-05 & 49 & 7.08e-05 & 113 & 5.5e-01 & 0.30 \\
rs1009840 & 2.17e-05 & 5.23e-05 & 35 & 4.90e-05 & 100 & 2.2e-01 & 1.22 \\
rs4896028 & 2.26e-05 & 6.82e-05 & 41 & 6.69e-05 & 110 & 3.1e-01 & 0.86 \\
rs500629 & 2.31e-05 & 8.08e-05 & 46 & 8.71e-05 & 124 & 5.6e-01 & 0.29 \\
rs6137547 & 2.32e-05 & 6.40e-05 & 39 & 5.77e-06 & 34 & 5.3e-01 & 0.38 \\
rs9300398 & 2.65e-05 & 5.04e-05 & 34 & 8.83e-05 & 125 & 2.5e-01 & 1.15 \\
rs4771987 & 3.16e-05 & 1.05e-04 & 56 & 9.49e-05 & 135 & 5.6e-01 & 0.29 \\
\hline
\end{longtable}

}

\appendix
\section{Supplementary appendix}
\small 
%
%
\subsection{Equaivalence of the new parametrization in Section \ref{sec:repar}}\label{asec:parnew}
%
%
%
\subsection{Proofs of Theorem \ref{the:mainresultgen} and Corollary \ref{the:mainresult}}\label{app:proof}
\begin{proof}[{\sl Proof of Theorem \ref{the:mainresultgen}}]
We set 
\[\mathcal{\tilde L}(\rho_1,\eta_1,\rho_2,\eta_2) = \mathcal{L}\big(T^{-1} (\rho_1,\eta_1),T^{-1}(\rho_2,\eta_2)\big),\qquad \mathcal{\tilde L}(\rho,\eta) = \mathcal{L}\big(T^{-1} (\rho,\eta)\big).\]
We let $(\hat \rho_1,\hat \eta_1,\hat \rho_2,\hat \eta_2)$ and $(\hat \rho,\hat \eta)$ be the unrestricted and restricted ML estimators of the transformed parameters, respectively. Denote the true values of the parameters by $\rho_1 = \rho_2 = \rho$ and $\eta_1 = \eta_2 = \eta$.  
For ease of notation set
\begin{eqnarray*}
E & = & (1 + \eta),  \qquad  C = \frac{c}{2 (1-c)}, \qquad K = \left(1 - \eta +\frac{(1 - c) E\eta^2 (1 -2 \rho )^2}{V}\right),\\
R & = & \rho (1 - \rho),\qquad  V = \eta^2(1 -2 \rho )^2 -2 R + \eta  (2 - 6 R)
\end{eqnarray*}

\begin{eqnarray}
\FI	& = & \frac{1}{2} \left(
\begin{array}{ll}
  (1 -\eta) R &  \eta  E (1 - 2 \rho) \\
 \eta  E (1 - 2 \rho) & - EV / R
\end{array}
\right)\label{eq:fisherinfotransparAppendix}
\end{eqnarray}
A standard argument in likelihood theory (see {\it e.g.} \ct{ferguson_1996}, p. 145, eq (4)) now shows that under our assumptions, 
\begin{eqnarray}\label{eq:loglikecond}
& & 2 \big( \mathcal{L}(\hat \rho_1, \hat \eta_1,\hat \rho_2,\hat \eta_2)-\mathcal{L}(\hat \rho,\hat \eta) \big)\nonumber\\
 & = & N \big(\hat \rho_1 - \hat \rho, \hat \rho_2-\hat \rho,\hat \eta_1 - \hat \eta, \hat \eta_2-\hat \eta \big) \FIjoor \big(\hat \rho_1 - \hat \rho, \hat \rho_2-\hat \rho,\hat \eta_1 - \hat \eta, \hat \eta_2-\hat \eta \big)^T +o_P(1)
\end{eqnarray}
where $\FI$ is as in (\ref{eq:fisherinfotranspar}) and 
\begin{eqnarray*}
\FIjoi	& = & \frac{1}{2} \left(
\begin{array}{llll}
  \frac{(1 -\eta) R}{c}&  0 &   \frac{\eta  E (1 - 2 \rho)}{c} & 0\\	
 0 &   \frac{(1 -\eta) R}{(1-c)}  &  0 &  \frac{\eta  E (1 - 2 \rho)}{(1-c)} \\
 \frac{\eta  E (1 - 2 \rho)}{c} & 0 & -\frac{EV }{cR}& 0 \\    
0 &   \frac{\eta  E (1 - 2 \rho)}{(1-c)}  &  0 & -\frac{EV }{R (1-c)} \\
\end{array}
\right)
\end{eqnarray*}
Now, straightforward computations give the covariance in the following normal limit,
\[  \sqrt{N} \big((\hat{\pi}_{11}, \hat{\pi}_{12}, \hat{\pi}_{21}, \hat{\pi}_{12}, \hat{\pi}_{1}, \hat{\pi}_{2}) -(\pi_1, \pi_2,\pi_1, \pi_2,\pi_1, \pi_2) \big) \to N(0,\Sigma_\pi), \]
where $(\pi_1, \pi_2) = T^{-1}(\rho,\eta)$ and
\[
\Sigma_{\pi} = 
\left(
\begin{array}{llllll}
 \frac{\left(1-\pi_1\right) \pi_1}{c} & -\frac{\pi_1 \pi_2}{c} & 0 & 0 & \left(1-\pi_1\right) \pi_1 & -\pi_1 \pi_2 \\
 -\frac{\pi_1 \pi_2}{c} & \frac{\left(1-\pi_2\right) \pi_2}{c} & 0 & 0 & -\pi_1 \pi_2 & \left(1-\pi_2\right) \pi_2\\
 0 & 0 & \frac{\left(1-\pi_1\right) \pi_1}{1-c} & -\frac{\pi_1 \pi_2}{1-c} & \left(1-\pi_1\right) \pi_1 & -\pi_1 \pi_2 \\
 0 & 0 & -\frac{\pi_1 \pi_2}{1-c} & \frac{\left(1-\pi_2\right) \pi_2}{1-c} & -\pi_1 \pi_2 & \left(1-\pi_2\right) \pi_2 \\
 \left(1-\pi_1\right) \pi_1& -\pi_1 \pi_2 & \left(1-\pi_1\right) \pi_1 & -\pi_1 \pi_2 & \left(1-\pi_1\right) \pi_1 &
   -\pi_1 \pi_2\\
 -\pi_1 \pi_2& \left(1-\pi_2\right) \pi_2 & -\pi_1 \pi_2& \left(1-\pi_2\right) \pi_2& -\pi_1 \pi_2 &
   \left(1-\pi_2\right) \pi_2
\end{array}
\right)
\]
Therefore, from the $\delta$-method, 
\begin{equation*}
\sqrt{N} \left( \begin{array}{c} \hat \rho_1 - \rho \\ \hat \eta_1 - \eta \\ \hat \rho_2 - \rho \\ \hat \eta_2 - \eta \\ \hat \rho - \rho \\ \hat \eta - \eta \end{array} \right) \to N(0,\FIcom), \qquad 
\FIcom = 
\left(
\begin{array}{lll}
  \FI / c &  0 & \FI\\
 0 &  \FI / (1-c) & \FI\\
 \FI & \FI & \FI
\end{array}
\right)
\end{equation*}
Since
\begin{equation*}
\left(
\begin{array}{l}
\hat \rho_1 - \hat \rho \\
\hat \rho_2 - \hat \rho \\
\hat \eta_1  - \hat \eta \\
\hat \eta_2 - \hat \eta \\
\hat \eta_1 - \eta  
\end{array}
\right) \quad 
 =  A \cdot \quad 
\left(
\begin{array}{l}
\hat \rho_1 - \rho  \\
\hat \eta_1 - \eta  \\
\hat \rho_2 - \rho  \\
\hat \eta_2 - \eta  \\ 
\hat \rho - \rho  \\
\hat \eta - \eta 
\end{array}\right)
, \qquad \text{where} \qquad 
A = \left(
\begin{array}{cccccc}
1 & 0 & 0 & 0& -1 & 0 \\
0 & 0 & 1 & 0 & -1 & 0 \\
0 & 1 & 0 & 0 & 0 & -1\\
0 & 0 & 0 & 1 & 0 & -1 \\
0 & 1 & 0 & 0 & 0 & 0 \\
\end{array}
\right) 
\end{equation*}
it follows that
\begin{equation}\label{eq:asympcondmajor}
 \sqrt{N} \big(
\hat \rho_1 - \hat \rho,  \hat \rho_2-\hat \rho,  \eta  - \hat \eta,  \hat \eta_2-\hat \eta,  \hat \eta_1 - \eta \big)^T  
 \to N (0, \widetilde{\mathcal{J}}), \qquad \text{ where } \ \widetilde{\mathcal{J}} = A \FIcom  A^T.
\end{equation}
Partitioning
\begin{equation}\label{eq:hilfsmatrix}
 \widetilde{\mathcal{J}} = \left(
\begin{array}{cc}
\widetilde{\mathcal{J}}_{1,1}  & \widetilde{\mathcal{J}}_{1,2} \\
\widetilde{\mathcal{J}}_{2,1}  & \widetilde{\mathcal{J}}_{2,2} 
\end{array}
\right), \qquad \widetilde{\mathcal{J}}_{1,1} \in \R^{4 \times 4},\ \widetilde{\mathcal{J}}_{2,2} \in \R^{1 \times 1}\ \widetilde{\mathcal{J}}_{2,1} = \widetilde{\mathcal{J}}_{1,2}^T \in \R^{1 \times 4},  
\end{equation}

the components of $ \widetilde{\mathcal{J}}$ in (\ref{eq:hilfsmatrix}) are evaluated as
%
\begin{eqnarray*}
\widetilde{\mathcal{J}}_{1,1} & = &	\left(
\begin{array}{ccccc}
	- (\eta - 1) R/ (4C)  & \frac{1}{2} (\eta -1)R 
		& E \eta (1 - 2 \rho)/ (4C) & -\frac{1}{2} E\eta (1 - 2 \rho)\\
	\frac{1}{2} (\eta -1)R   & -C (\eta -1) R
	& -\frac{1}{2} E\eta (1 - 2 \rho)  &  C E \eta (1 - 2 \rho)\\
	E \eta (1 - 2 \rho)/ (4C) & -\frac{1}{2} E\eta (1 - 2 \rho)
	& -EV / (4CR) &  EV / (2R)\\
	-\frac{1}{2} E \eta (1 - 2 \rho) &  C E \eta (1 - 2 \rho)
	& EV / (2R) & -CEV / R
\end{array}
\right)
\end{eqnarray*}

\begin{eqnarray*}
\widetilde{\mathcal{J}}_{2,1} & = &	\left(
	 E \eta (1 - 2 \rho)/(4C),
	-E\eta (1 - 2 \rho)/2,
	-EV / (4CR),
	-EV / (2R)
	\right),\\
\widetilde{\mathcal{J}}_{2,2} & = &	 -EV / (2cR)
.
\end{eqnarray*}

From (\ref{eq:asympcondmajor}), and the formulas for $ \widetilde{\mathcal{J}}$, we have for the asymptotic conditional distribution 
\begin{equation}\label{eq:asympcond}
 d\Big(\mathcal{D}\Big(\sqrt{N} \big(\hat \rho_1 - \hat \rho, \hat \rho_2-\hat \rho, \hat \eta_1 - \hat \eta, \hat \eta_2-\hat \eta \big)^T| \hat \eta_1\Big), N(\mu(\hat \eta_1), \Sigma)\Big) \to 0 \qquad \text{ in probability},
\end{equation}
where 
\[
	\mu(\hat \eta_1) = \sqrt{N} (\hat \eta_1- \eta )\, 
		(\widetilde{\mathcal{J}}_{1,2} \widetilde{\mathcal{J}}_{2,2}^{-1}), \qquad
		\Sigma = \widetilde{\mathcal{J}}_{1,1}
		- \widetilde{\mathcal{J}}_{1,2} \widetilde{\mathcal{J}}_{2,2}^{-1} \widetilde{\mathcal{J}}_{2,1}.
\]
Then $\mu(\hat \eta_1)$ is evaluated as (\ref{eq:noncentralparcond}), and $\Sigma$ as

\begin{eqnarray}\label{eq:covasymcondright}
\Sigma & = &
	\left(
	\begin{array}{cccc}
	R K/ (4C)
	& - R K / 2
	& (1-c) E \eta (1 - 2 \rho)/2
	& -c	E \eta (1 - 2 \rho)/2 \\
	- R K / 2
	& C K R
	& -c	E \eta (1 - 2 \rho)/2
	& c C E \eta (1 - 2 \rho) \\
	 (1-c) E \eta (1 - 2 \rho)/2
	&  -c	E \eta (1 - 2 \rho)/2
	& -(1 - c) E V / (2R)
	& c E V / (2R)\\
	-c	E \eta (1 - 2 \rho)/2 
	&  c C E \eta (1 - 2 \rho)
	& c E V / (2R)
	& - c C E V	/ R 
\end{array}
	\right).
\end{eqnarray}

From (\ref{eq:loglikecond}) we deduce the desired result. 
\end{proof}  
\begin{proof}[{\sl Proof of Corollary \ref{the:mainresult}}]
In case of HWE, i.e.~$\eta =0$, we have
\begin{eqnarray*}
E  =  1,\quad  K = 1,\quad V= -2 R, 
\end{eqnarray*}
\end{proof}
so that $		\mu(\hat \eta_1) = 	\big(	0, 0, 1-c,
	-c\big)\, \sqrt{N} \hat \eta_1,
$ 
\begin{eqnarray*}
\Sigma & = &
	\left(
	\begin{array}{cccc}
	(1-c)R  / 2c
	& -\frac{1}{2} R  
	& 0
	& 0\\
	-\frac{1}{2} R  
	& c R / 2(1-c)
	& 0
	& 0 \\
	0
	& 0
	& 1-c
	& -c \\
0
	&  0
	&  -c
	& \frac{  c^2}{(1-c)} 
	\end{array}
	\right),\\
\FIjoor & = & \text{diag}\, \Big(
  \frac{2\, c}{R}, \frac{2\, (1-c)}{R}  , c, (1-c)\Big) 
\end{eqnarray*}
Therefore, 
\begin{equation*}
 d\Big(\mathcal{D}\Big(\sqrt{N} \big(\hat \rho_1 - \hat \rho, \hat \rho_2-\hat \rho, \hat \eta_1 - \hat \eta, \hat \eta_2-\hat \eta \big)^T| \hat \eta_1\Big), Z\Big) \to 0 \qquad \text{ in probability},
\end{equation*}
where for independent $X, Y \sim N(0,1)$, both independent of $\hat \eta_1$,  
\begin{eqnarray*}
Z & = & c (1-c)/c X^2 + (1-c) c / (1-c) X^2 \\
	&  + &  c \big( (1-c)^{1/2} Y + (1-c) \sqrt{N} \hat \eta_1 \big)^2 + (1-c)\, \Big(\frac{c}{(1-c)^{1/2}}Y + c \sqrt{N} \hat \eta_1 \Big)^2\\
	& = & X^2 + c \Big(Y +\sqrt{N} \hat \eta_1 (1-c)^{1/2} \Big)^2,
\end{eqnarray*}
and the assertion follows. 
%
%
\subsection{Numerical computation of P-values}\label{asec:comput}
The asymptotic conditional distribution under the hypothesis of equal genotype distributions and HWE in Corollary \ref{the:mainresult} is the convolution of the $\chi^2_1$-variable and a rescaled non-central $\chi^2_1$ variable. Now, the distribution function $F_1(x;\lambda)$ of $\chi^2_1(\lambda )$ ($\chi^2_1$ with non-centrality parameter (NCP) $\lambda$) can be written as ({\it cf.} \ctp{johnson_1994}) 
\[ F_1(x; \lambda) = \sum_{j=0}^\infty \Big(\frac{(\lambda/2)^j}{j!} e^{-\lambda/2} \Big) F_{1 + 2j}(x),\]
where $F_\nu(x)$ is the distribution function of the central $\chi^2_\nu$ distribution. Therefore, it would be sufficient to compute the convolutions of $\chi^2_1$ and the rescaled $c \chi^2_{1 + 2j}$, $j \geq 0$. However, this amounts to computing the convolution of two gamma variables with distinct scale parameters ({\it cf.} \ctp{johnson_1994}). Although methods for this problem do exist, we found it preferable to simply compute the convolution of the densities of $\chi^2_1$ and $c \chi^2_1(\lambda)$  numerically, as subsequently outlined.

Numerical integration is used to compute the distribution function of the convolution $Z := Z_1 + c Z_2(\lambda)$, with NCP $\lambda = N \hat{\eta}_1 (1 - c)$. Denote with $\varphi(x, y) = \varphi(x,y; \lambda, c)$ the joint density of $Z_1$ and $c Z_2(\lambda)$. Therefore $F_Z(x) = \int_0^x \int_0^z \varphi(t, z - t) dt dz$, if $F_Z$ denotes the distribution function of $Z$. A first change of variables allows for a rectangular integration area: $\theta: (x, y) \to (x, x y)$,
	$F_Z(x) = \int_0^x \int_0^1 \varphi \circ \theta(z, t) |D \theta| dt dz$.
Note, that $\varphi \circ \theta(z, t)$ has poles at $t = 0$ and $t = 1$ for all values $z$. The poles behave as $\frac{1}{\sqrt{t}}$ for $t \to 0$ and as $\frac{1}{\sqrt{1 - t}}$ for $t \to 1$. For $t$, we split the integral into subintervals $(0, \frac{1}{4})$, $(\frac{1}{4}, \frac{3}{4})$, $(\frac{3}{4}, 1)$ and apply transformations in the outer intervals in order to eliminate the poles. To this end, we use the identities $\int_0^b f(x) = \int_0^{\sqrt{b}} 2t f(t^2)$ for the pole at $0$ and $\int_a^1 f(x) = \int_0^{\sqrt{1-a}} 2t f(1 - t^2)$ for the pole at $1$.

Therefore, after some algebraic transformations, the distribution function of $Z$ is given by:
\begin{eqnarray*}
	&	& F_Z(x;\lambda, c)\\
	& = & \int_0^{\frac{1}{4}} \int_0^x
		\frac{2}{\pi}
		\exp \left\{-\frac{1}{2}\left(\lambda + s + 4 t^2 s (\frac{1}{c} - 1)
			+ \log(c(1 - 4t^2))\right)\right\} 
		\cosh\left(2 t \sqrt{\frac{\lambda s}{c}}\right)ds\ dt\\
	& +	& \int_{\frac{1}{4}}^{\frac{3}{4}} \int_0^x
		\frac{1}{2 \pi}
		\exp \left\{-\frac{1}{2}\left(\lambda + s + s t (\frac{1}{c} - 1)
			+ \log(c t (1 - t))\right)\right\} 
		\cosh\left( \sqrt{\frac{\lambda s t}{c}}\right)ds\ dt\\
	& +	& \int_{\frac{3}{4}}^1 \int_0^x
		\frac{2}{\pi}
		\exp \left\{-\frac{1}{2}\left(\lambda + \frac{s}{c} + 4 t^2 s (1 - \frac{1}{c})
			+ \log(c(1 - 4t^2))\right)\right\} 
		\cosh\left(\sqrt{\frac{\lambda s (1 - 4 t^2)}{c}}\right)ds\ dt,
\end{eqnarray*}

which is amenable to numerical integration. We use the software package \emph{adapt} in \emph{R} version 2.8.1 to compute the integrals.

%
%
\subsection{Proof of Theorem \ref{the:localalt}}
\begin{proof}
Similarly as above, setting $a = \rho_0 (1- \rho_0)/2$ we have under $P_N$ that
\begin{align}\label{eq:loglikecondalt}
&  2 \big( \mathcal{L}(\hat \rho_1, \hat \eta_1,\hat \rho_2,\hat \eta_2)-\mathcal{L}(\hat \rho,\hat \eta) \big)\nonumber\\
	 = & N \Big(\frac{c}{a} \big( \hat \rho_1 - \hat \rho\big)^2 + \frac{1-c}{a} \big( \hat \rho_2 - \hat \rho\big)^2 \Big) + N \Big(c \big( \hat \eta_1 - \hat \eta\big)^2 + (1-c)\, \big( \hat \eta_2 - \hat \eta\big)^2 \Big)+ o_{P_N}(1).
\end{align}

Therefore, from the $\delta$-method, 
\begin{equation}\label{eq:limitjoint}
\sqrt{N} \big((\hat \rho_1,\hat \eta_1,\hat \rho_2,\hat \eta_2, \hat \rho,\hat \eta )^T - (\rho_0, 0,\rho_0,0,\rho_0,0)^T \big) \stackrel{P_N}{\to} N\big(\mu_\Delta,\FIcom\big), 
\end{equation}
where
\[
\FIcom = 
\left(
\begin{array}{llllll}
 a/c & 0 & 0 & 0 & a & 0 \\
0 & b/c & 0 & 0 & 0 & b\\
 0 & 0 & a/(1-c) & 0 & a & 0 \\
 0 & 0 & 0 & b/(1-c) & 0 & b \\
 a& 0 & a & 0 & a & 0\\
0 & b & 0 & b & 0  & b
\end{array}
\right), \qquad \mu_\Delta \left(
\begin{array}{c} 0 \\ 0 \\ \rho_\Delta \\ \eta_\Delta \\ (1-c)\,\rho_\Delta \\ (1-c)\,\eta_\Delta \end{array}
\right)
\]
Thus, the two squares in (\ref{eq:loglikecond}) are asymptotically independent. For the term involving the $\eta$'s, we note that 
\[ d\Big( \mathcal{D}_{P_N}\Big(\sqrt{N} \big(\hat \eta_2,\hat \eta \big)|\hat \eta_1\Big), \big(N(\mu_{\hat \eta_1},\Sigma_{cond})|\hat \eta_1\big)\Big) = o_{P_N}(1),\]
where
\[ \mu_{\hat \eta_1} = (\eta_\Delta,(1-c)\, \eta_\Delta + c \sqrt{N} \hat \eta_1)^T,\qquad \Sigma_{cond} =  \left(\begin{array}{cc}  1/(1-c) & 1 \\ 1 & (1-c)  \end{array} \right).\]
Thus,
\[ d\Big(\mathcal{D}\Big(N \Big(c \big( \hat \eta_1 - \hat \eta\big)^2 + (1-c)\, \big( \hat \eta_2 - \hat \eta\big)^2|\hat \eta_1\Big), \mathcal{D}\Big(W| \hat \eta_1 \Big)\Big) = o_P(1),\]
where 
\[ W = c \big((1-c)^{1/2} X - (1- c) (\sqrt{N} \hat \eta_1 - \eta_\Delta)\big)^2 + (1-c)\, \big((1-c)^{1/2} X + c (\sqrt{N} \hat \eta_1- \eta_\Delta)- X/(1-c)^{1/2}\big)^2 \]
and $X\sim N(0,1)$ is independent of $\hat \eta_1$. $W$ reduces to
 \[ W|\hat \eta_1  = c \Big(X - (1-c)^{1/2}\big(\sqrt{N} \hat \eta_1 - \eta_\Delta\big)  \Big)^2|\hat \eta_1  \sim c \chi^2_1\Big((1-c) \, \big(\sqrt{N} \hat \eta_1 - \eta_\Delta\big)^2\Big).\]
The other term in (\ref{eq:loglikecondalt}) is dealt with similarly, and the theorem follows. 
\end{proof}
%

%
\subsection{Details on GWAS simulation set-up in Section \ref{sec:gwassim}}\label{asec:gwassim}

We here give details about the simulation of genotypes for the GWAS simulations. With the penetrance model
\begin{eqnarray*}\label{sim_penetranceAppendix}
	P(Y = 1 | g_1, ..., g_K )
	& = &
		\left(
			1 + \exp\left\{-\left(\mu + \sum_{i = 1}^{K} \beta_i x_i \right)\right\}
		\right)^{-1}
		=: Y(\mathbf{g}; \beta).
\end{eqnarray*}
and the choosing of parameters as described in the simulation section the alternative hypothesis is fully specified. However, computationally it is not straightforward to choose the number of loci under the alternative $K$ and random penetrance parameters $\mathbf{\beta} = (\beta_2, ..., \beta_K)$ such that
$$
	\alpha = P(Y = 1),
$$

for a given $\alpha$. In order to efficiently draw parameters we use a step-wise procedure. We construct a parameter vector $\theta^{(i)} = (\alpha, \mu, \beta_1, \beta_2, ..., \beta_{i + 1}, \rho_1, \rho_2, ..., \rho_{i + 1})$ by adding parameters $\beta_{i + 1} \sim U(1.1, \beta_1)$, $\rho_{i + 1} \sim U(\rho_1/2, \rho_1)$ to $\theta^{(i - 1)}$. We then estimate prevalence $P(Y = 1)$ by $\hat{\alpha}$ and accept $\theta^{(i)}$ if $\hat{\alpha} < \alpha$. Otherwise we draw new parameters $\beta_{i + 1}, \rho_{i + 1}$. We stop the procedure, if $\hat{\alpha} \in (\alpha - \epsilon, \alpha + \epsilon)$. We use $\epsilon = 10^{-2}$ in all simulations.\par

As computation of $\alpha$ would require summation over all possible genotype combinations in formula (4), the number of which grows exponentially with the number of loci, we instead estimate $\alpha$ by Monte-Carlo integration. Instead, we compute $\hat{\alpha} = \sum_{(g_1,.., g_K) \in \mathbf{G_0}} Y(\mathbf{g}; \beta) G(\mathbf{g}; \rho)$, with $\mathbf{G_0} = (g_{1j},.., g_{ij})_{j=1}^{M}$, where each $g_{lj}$ is independently drawn from the corresponding genotype distribution of locus $l$. We use $M = 10^5$ in the simulations.\par
After specifying the alternative, we independently draw genotypes from the distribution as specified by $\rho = (\rho_1, ..., \rho_K)$ and assign a phenotype according to the penetrance model. As we typically use $\alpha = .1$ in the simulations, the ratio of generated cases and controls is around 1:9, which is a tolerable excess of controls.\par
Finally we simulate $S$ control loci for which the distribution between cases and controls is identical. We use $S = 3 \times 10^5$ in the simulations. We average results from simulations over $10^3$ runs for each alternative that was generated once as outlined above.

\newpage

\subsection{Further tables and figures for data analysis}\label{asec:dataanal}

\begin{longtable}{lrrrrrrr}
\caption{Analysis of an alopecia data set. Results are ordered according to P-values of EHWE and the first 30 SNPs are shown. Columns in parentheses show order statistics. {\it C} denotes the conditional test, {\it P} denotes the Pearson test and {\it E} denotes EHWE. $\lambda$ denotes the NCP of the conditional test.}\label{data_tabE}\\
SNP & EHWE & C & (C) & P & (P) & HWE & $\lambda$ \\
\hline
rs1506694 & 1.13e-11 & 8.17e-03 & 2195 & 3.43e-03 & 900 & 7.9e-03 & 4.25 \\
rs12357377 & 1.13e-10 & 9.68e-01 & 245387 & 2.23e-01 & 57416 & 4.5e-07 & 16.24 \\
rs2189935 & 1.88e-10 & 2.47e-01 & 63594 & 4.02e-02 & 10418 & 2.2e-04 & 6.93 \\
rs3760877 & 1.07e-09 & 1.51e-01 & 39079 & 1.44e-02 & 3768 & 6.2e-03 & 6.79 \\
rs396999 & 3.17e-09 & 1.00e+00 & 253779 & 6.01e-01 & 153007 & 7.2e-09 & 23.04 \\
rs1873921 & 6.30e-09 & 1.06e-01 & 27530 & 2.40e-03 & 640 & 3.1e-02 & 6.12 \\
rs13362504 & 3.45e-08 & 1.73e-02 & 4630 & 4.57e-03 & 1208 & 8.4e-03 & 4.91 \\
rs11232869 & 7.84e-08 & 7.72e-01 & 195223 & 2.85e-01 & 73175 & 3.3e-05 & 7.20 \\
rs1491485 & 1.45e-07 & 9.71e-01 & 246040 & 6.68e-02 & 17202 & 1.7e-09 & 22.81 \\
rs12282 & 1.50e-07 & 2.36e-04 & 80 & 6.35e-04 & 184 & 9.4e-01 & 0.00 \\
rs632547 & 1.63e-07 & 9.88e-01 & 250757 & 4.88e-01 & 124485 & 2.1e-06 & 12.32 \\
rs7857803 & 1.96e-07 & 5.35e-01 & 135381 & 8.97e-02 & 23171 & 3.5e-03 & 7.71 \\
rs4307321 & 3.35e-07 & 8.29e-01 & 209871 & 2.02e-01 & 52160 & 4.7e-04 & 10.17 \\
rs1998076 & 4.23e-07 & 1.21e-07 & 1 & 7.15e-07 & 1 & 5.9e-01 & 0.24 \\
rs647731 & 5.53e-07 & 5.18e-02 & 13690 & 3.01e-02 & 7797 & 6.2e-02 & 2.49 \\
rs2687860 & 8.08e-07 & 1.00e+00 & 253697 & 3.76e-01 & 96292 & 1.5e-07 & 24.61 \\
rs295117 & 9.08e-07 & 9.85e-01 & 249783 & 4.01e-01 & 102635 & 8.9e-05 & 13.95 \\
rs6075852 & 9.65e-07 & 2.50e-07 & 2 & 1.47e-06 & 2 & 6.6e-01 & 0.17 \\
rs9285864 & 9.70e-07 & 8.32e-01 & 210441 & 1.49e-02 & 3907 & 1.2e-07 & 21.28 \\
rs2180439 & 1.01e-06 & 2.53e-07 & 3 & 1.54e-06 & 3 & 6.6e-01 & 0.17 \\
rs2975520 & 1.19e-06 & 1.00e+00 & 253844 & 8.32e-01 & 211079 & 1.0e-09 & 25.35 \\
rs875001 & 1.42e-06 & 1.49e-01 & 38498 & 6.47e-02 & 16671 & 3.4e-02 & 3.19 \\
rs201571 & 2.44e-06 & 8.37e-07 & 4 & 4.68e-06 & 7 & 5.8e-01 & 0.26 \\
rs2425628 & 2.78e-06 & 9.71e-01 & 246163 & 4.55e-01 & 115990 & 2.7e-04 & 10.94 \\
rs6113491 & 3.13e-06 & 8.75e-07 & 5 & 1.78e-06 & 4 & 1.2e-01 & 2.01 \\
rs970952 & 3.51e-06 & 6.56e-01 & 165884 & 2.47e-01 & 63608 & 1.9e-03 & 6.09 \\
rs6137444 & 3.63e-06 & 2.17e-06 & 6 & 1.02e-05 & 8 & 6.5e-01 & 0.18 \\
rs10824842 & 3.93e-06 & 1.00e+00 & 253840 & 3.42e-01 & 87705 & 8.3e-04 & 32.92 \\
rs1555257 & 4.53e-06 & 3.37e-03 & 965 & 3.01e-03 & 803 & 2.5e-01 & 1.03 \\
rs7744253 & 4.67e-06 & 8.75e-01 & 221379 & 3.55e-01 & 90907 & 9.4e-04 & 7.74 \\
\hline
\end{longtable}

\begin{longtable}{r|rrrr}
\caption{Correlation structure between the conditional, EHWE, Pearson, HWE tests for the obesity data set.}\label{data_cor}\\
 & Conditional & EHWE & Pearson & HWE \\
\hline
Conditional & 1.00 & 0.76 & 0.95 & -0.01 \\
EHWE & 0.76 & 1.00 & 0.72 & 0.01 \\
Pearson & 0.95 & 0.72 & 1.00 & 0.24 \\
HWE & -0.01 & 0.01 & 0.24 & 1.00 \\
\hline
\end{longtable}

\renewcommand{\floatpagefraction}{.95}
\begin{figure}[!ht]\label{data_qqPlot}
	\caption{Q-Q-plots for p-values derived from the obesity data set comparing an empirical distribution with a uniform distribution on a double logarithmic scale. Part (A) shows the HWE test in controls, (B) is the conditional test, (C) the Pearson test and (D) is EHWE.}
	\includegraphics[width=.9\hsize]{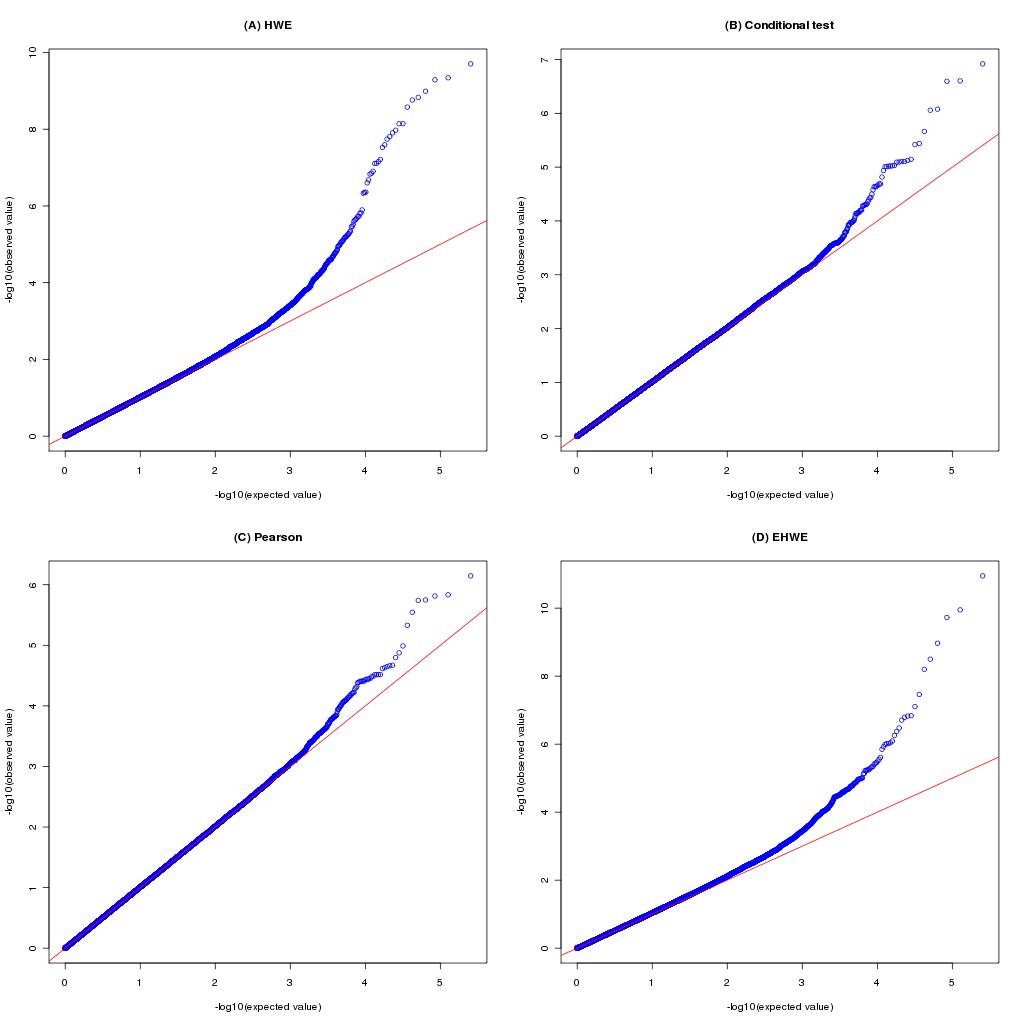}
\end{figure}

\end{document}